\documentclass[lettersize,journal]{IEEEtran}
\usepackage{array}
\usepackage{cite}
\usepackage{amsmath,amssymb,amsfonts}
\usepackage{mathrsfs}
\usepackage{algorithmic}
\usepackage{subcaption}
\usepackage{graphicx}
\usepackage{textcomp}
\usepackage{xcolor}
\definecolor{blue}{RGB}{0, 0, 255}
\definecolor{red}{RGB}{255, 0, 0}
\usepackage{amsfonts}
\usepackage[normalem]{ulem} 
\usepackage{cancel}
\usepackage{adjustbox}

\usepackage{epstopdf}
\usepackage{textcomp}
\usepackage{stfloats}
\usepackage{multirow}
\usepackage{algorithm}
\usepackage{setspace}
\usepackage{epsfig}
\usepackage{bm}
\usepackage{ntheorem} 
\newtheorem{definition}{Definition} 
\newtheorem{theorem}{Theorem}
\newtheorem{Corollary}{Corollary} 

\newtheorem{lemma}{Lemma}  

\newtheorem{assumption}{Assumption}
\newenvironment{proof}{{\indent \indent \it Proof:}}{\hfill $\square$}
\def\BibTeX{{\rm B\kern-.05em{\sc i\kern-.025em b}\kern-.08em

    T\kern-.1667em\lower.7ex\hbox{E}\kern-.125emX}}
\begin{document}

\title{RRAM Circuit-enabled Nonlinear Precoding and Bit Precision Analysis\\
}
\author{\IEEEauthorblockN{Yuhao~Zhang, Haifan~Yin,~\IEEEmembership{Senior Member,~IEEE,} Tao~Wang, Jindiao~Huang, and Kewei Zhu}
\thanks{Part of this work has been presented in the 2025 IEEE 101st Vehicular Technology Conference (VTC2025-Spring)\cite{zyh_VTC}.}
\thanks{The authors are with the School of Electronic Information and Communications, Huazhong University of Science and Technology, 430074 Wuhan, China (e-mail: yhz@hust.edu.cn, yin@hust.edu.cn, twt@hust.edu.cn, jindiaohuang@hust.edu.cn, zhu@hust.edu.cn).}% <-this % stops a space
\thanks{The corresponding author is Haifan Yin.}
\thanks{This work was supported by Mobile Information Network-National Science and Technology Major Project under Grant 2026ZD1306400, the Major Program (JD) of Hubei Province under Grant 2025BEA001, and the Fundamental Research Funds for the Central Universities.}
}
\maketitle

\begin{abstract}
The rising number of users and antennas imposes exponentially growing computational loads on future communication systems. Yet conventional processors are facing a bottleneck for their nature of memory-computing separation. In-memory computing (IMC) emerges as a promising solution leveraging its intrinsic high parallelism. This work proposes an IMC architecture that employs resistive random access memory (RRAM) to reduce the computational complexity of the nonlinear Tomlinson-Harashima precoding (THP) to a linear scale. We present a computation-constraint principle for designing RRAM circuits to perform nonlinear matrix operations and construct an LQ decomposition RRAM circuit. Since the conductance of memristor is generally quantized, we perform the bit precision analysis and derive the lower bound of the Signal-to-Interference-plus-Noise Ratio (SINR) and achievable rate. Our analysis indicates that at a high Signal-to-Noise Ratio (SNR) or with a large number of antennas, each 1-bit precision increase yields 6 dB SINR gain and linear rate growth. For practical implementation, we derive the optimal bit precision to sustain SINR performance under varying system configurations. Simulation demonstrates the feasibility and accuracy of the RRAM-based circuit and our theoretical analysis. Our work proves that RRAM-based IMC holds significant potential for high-complexity nonlinear precoding, addressing the escalating computational demands for future communications. 
\end{abstract}

\begin{IEEEkeywords}
RRAM, memristor, massive MIMO, THP, LQ decomposition, in-memory computing, bit precision
\end{IEEEkeywords}

\section{Introduction}
\IEEEPARstart{M}{assive} multi-input multi-output (MIMO) technology effectively exploits the spatial resources of wireless channels and achieves substantial gains in both spectral and energy efficiency\cite{Marzetta10TCom}. However, overlapping signals from co-channel users introduce considerable multiuser interference (MUI) and degrade system performance.

Precoding algorithms, by pre-processing transmitted signals, effectively suppress MUI and emerge as a critical technique at base stations. Precoding algorithms are broadly categorized into two classes: linear precoding and nonlinear precoding. Linear precoding algorithms weight the transmitted signals using linear transformation matrices to suppress MUI, like matched-filter (MF) and zero-forcing (ZF) algorithms\cite{Precodingover3}. Nonlinear precoding introduces additional nonlinear operations to optimize signal transmission, such as dirty paper coding (DPC) and Tomlinson-Harashima precoding (THP)\cite{DPC, THP}. In general, nonlinear precoding may provide better performance than linear precoding, though it involves higher computational complexity\cite{Precodingover2}. However, with the rising number of antennas and users, the channel matrix experiences a substantial growth in dimensionality. This expansion leads to unaffordable computational complexity. When the computation time surpasses the channel coherence time, the outdated channel state information (CSI) can cause considerable MUI, which, in turn, degrades the performance of massive MIMO systems. Communication systems often employ simpler yet suboptimal precoding techniques at base stations to ensure that the precoding algorithm is deployable in practice.

Over the past few decades, with the development of novel optoelectronic materials\cite{memristor}, a revolutionary computing architecture, in-memory computing (IMC) has been proposed, which may offer a promising solution to implement high-complexity algorithms. In contrast to the conventional von Neumann architecture, where computation and memory units are physically separated, IMC enables data processing directly at storage locations, thereby eliminating redundant data migration and reducing latency and power consumption \cite{ielmini2018memory}. Based on the underlying computing principle, IMC architectures can be broadly classified into digital IMC and analog IMC. Digital IMC performs fully digital operations\cite{ReDCIM23JSSC}, whereas analog IMC exploits physical-domain properties to perform computation directly in the analog domain. Analog IMC can be implemented using different memory technologies, among which SRAM-based analog IMC and RRAM-based analog IMC are two mainstream approaches\cite{CIMoverview}. Existing studies have verified that both approaches can achieve $\mathcal{O}(1)$ complexity for matrix operations, such as matrix-vector multiplication (MVM) and matrix inversion\cite{MVM,sun2019solving, SRAMonestep,mannocciFullyIntegratedAnalogue2026}. SRAM-based analog IMC features high computational accuracy and compute density in reported works\cite{SRAMACIM21JSSC, SRAMCIMOver}. RRAM-based analog IMC is characterized by non-volatility to eliminate the energy from data retention. RRAM devices also provide potential high integration density at the bit-cell-array level\cite{ielmini2018memory}.

Recently, there has been growing interest in deploying RRAM-based analog IMC in communication systems. In \cite{wang2023parallel}, the authors construct an orthogonal frequency division multiplexing transceiver by memristor arrays and achieve a bit error rate of $0/480$ in transmission.
RRAM-based ZF precoding and detection are proposed in \cite{ZF_RRAM,zuo2025precise} and present a significant advantage in computational complexity and energy efficiency. 
The authors of \cite{RZF_RRAM} implement RRAM circuits to implement regularized zero-forcing (RZF) detection and beamforming.
In \cite{FFT}, RRAM devices are utilized to accelerate baseband processing and a mathematical model is established for evaluating the relationship between the latency of programming and antenna configurations. By integrating RRAM circuits and analog multipliers, the authors of \cite{ML} achieve computational acceleration for maximum-likelihood (ML) detection. Based on gradient descent, the authors of \cite{MMSE_RRAM} convert the classic minimum mean square error (MMSE) detection algorithm into a linearized iterative algorithm that is readily implementable by RRAM arrays. For large-scale RRAM arrays used for MIMO signal processing, the authors of \cite{THU22RRAM} propose a model to address the IR-drop effects induced by non-negligible parasitic resistance.

However, the existing literature mainly focuses on leveraging the linear combination of $\mathcal{O}(1)$-complexity MVM and matrix inversion operations of RRAM circuits to accelerate linear precoding or detection algorithms. The implementation of more complex matrix operations, such as LQ decomposition, remains challenging with RRAM devices. This limitation restricts the deployment of advanced communication algorithms, particularly nonlinear precoding algorithms. To date, few studies have explored architectures that utilize RRAM to accelerate nonlinear precoding. 

Furthermore, how finite bit precision in RRAM affects communication performance remains insufficiently characterized in the existing literature. The resistive switching behavior of RRAM relies on the migration of oxygen vacancies and the dynamic formation/rupture of conductive filaments\cite{li2015conductance}. These microscopic processes are inherently random. It is difficult to precisely program the device to arbitrary conductance levels. Therefore, practical RRAM-based IMC typically adopts finite conductance states to improve robustness against device non-idealities\cite{li2018analogue,liu2026memristor}. While prior investigations in \cite{RZF_RRAM,ML} have explored the impact of bit precision through simulations, a mathematical framework has yet to be established.

In this paper, we propose an IMC architecture that employs RRAM circuits to implement the nonlinear THP algorithm and reduce its computational complexity. Firstly, we present a computation-constraint principle to perform nonlinear matrix operations by RRAM circuits. Following the principle, an RRAM-based LQ decomposition acceleration circuit with the complexity of $\mathcal{O}(1)$ is presented. Based on the RRAM circuits, we propose the RRAM-based THP algorithm with linear complexity. Since the memristor generally has limited bit precision, we consider the quantization error and derive the lower bound of Signal-to-Interference-plus-Noise Ratio (SINR) and achievable rate performance for the limited-precision RRAM-based THP algorithm. Finally, we present simulation results to validate the performance of the RRAM-based THP algorithm and the accuracy of our theoretical analysis.

The main contributions of our work are as follows:
\begin{itemize}
\item{We propose an RRAM-based IMC computing architecture for nonlinear THP algorithms. We conclude a computation-constraint principle to perform nonlinear matrix operations by RRAM circuits and design an RRAM-based LQ decomposition acceleration circuit with the complexity of $\mathcal{O}(1)$. The proposed circuit exhibits a significant throughput advantage over graphics processing units (GPUs) and field-programmable gate arrays (FPGAs). The complexity of the RRAM-based THP algorithm is reduced from $\mathcal{O}(K^3)+\mathcal{O}(K^2N)$ to $\mathcal{O}(K)$. }
\item{We prove that quantization errors caused by the finite bit precision of memristors are approximately uniformly distributed. Based on it, we derive closed‑form lower bounds for the SINR and achievable rate of limited‑precision RRAM‑based THP. This derivation reveals the existence of a ceiling effect for both the number of antennas and Signal-to-Noise Ratio (SNR). We demonstrate that under high‑SNR or large‑antenna regimes, the SINR performance exhibits a characteristic $6$ dB improvement per additional bit precision and the achievable rate shows an approximately linear growth trend with respect to the bit precision. }
\item{We define quantization efficiency coefficient as a universal metric to evaluate bit precision under varying SNR conditions and antenna configurations. We derive the optimal bit precision to sustain system performance across various requirements of the quantization efficiency coefficient. This may offer a quantitative guideline for establishing bit precision values in practical applications.}
\end{itemize}

The rest of this paper is organized as follows. In Sec. \uppercase\expandafter{\romannumeral2}, the system model of the THP algorithm is introduced. In Sec. \uppercase\expandafter{\romannumeral3}, we construct an RRAM-based LQ decomposition acceleration circuit and provide the system architecture of the RRAM-based THP algorithm. In Sec. \uppercase\expandafter{\romannumeral4}, we analyze the impact of the bit precision of memristors on system performance and derive the closed-form expression of the lower bound of SINR and achievable rate in the RRAM-based THP algorithm. The simulation results are provided in Sec. \uppercase\expandafter{\romannumeral5} and the final conclusions are described in Sec. \uppercase\expandafter{\romannumeral6}.

\textit{Notation:} Matrices and vectors are denoted by boldface uppercase letters and boldface lowercase letters, respectively. For a matrix $\mathbf{X}$, its conjugate, transpose, conjugate transpose and determinant are denoted by $(\mathbf{X})^{*}$, $(\mathbf{X})^{T}$, $(\mathbf{X})^{H}$ and $\operatorname{det}(\mathbf{X})$, respectively. $\text{diag}\{x_1,\dots,x_N\}$ is a diagonal matrix with $x_1,\dots,x_N$ at the main diagonal. The $ \ell_2 $ norm of a vector is denoted by $\|\cdot\|_2$. $\mathbf{I}_K$ denotes the $K \times K$ identity matrix. $\mathbb{C}$ and $\mathbb{R}$ denote the sets of complex and real numbers, respectively. $\mathcal{CN}(\mu,\sigma^2)$ stands for the complex univariate Gaussian distribution with the mean $\mu$ and variance $\sigma^2$. $\left \lfloor\cdot\right \rfloor$ and $\left \lceil\cdot\right \rceil$ denote the floor and ceiling functions, respectively. $\mathbb{E}\{\cdot\}$ represents the expectation. 

\section{System Model}
In this paper, we consider the multi-user downlink transmission, with $N$ antennas at the base station and $K$ single-antenna users such that $N \ge K$. Let $\mathbf{H}\in\mathbb{C}^{K \times N}$ be the channel matrix between the base station and users with independent and identically distributed (i.i.d.) complex Gaussian elements $\mathcal{CN}(0,\sigma_h^2)$. 

Fig. \ref{THPmodel} is the system model of the THP algorithm. The feedforward precoding filter $\mathbf{F} \in \mathbb{C}^{N \times K}$ is used to perform a spatial channel pre-equalization\cite{THP}. The feedback filter $\mathbf{B} \in \mathbb{C}^{K \times K}$, constrained to be a unit-diagonal lower triangular matrix, successively eliminates interference caused by previous users. The scaling filter $\mathbf{G} \in \mathbb{C}^{K \times K}$ is a diagonal matrix to compensate for the channel gain. To obtain the filters 
\begin{figure}[t]
\centering
\includegraphics[width=0.45\textwidth]{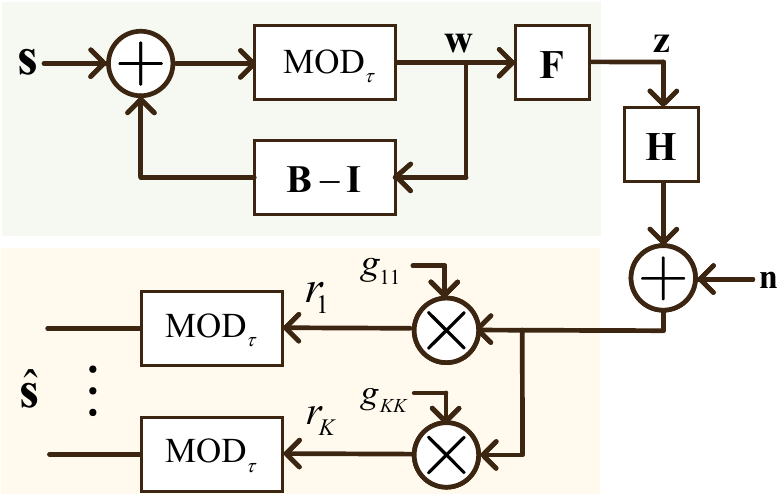}
\caption{System model of THP algorithm}
\label{THPmodel}
\end{figure}in the THP algorithm, LQ decomposition is performed on the channel matrix.
\begin{equation}
\label{LQ}
    \mathbf{H}=\mathbf{L}\mathbf{Q},
\end{equation}
where $\mathbf{L} \in \mathbb{C}^{K \times K}$ and $\mathbf{Q} \in \mathbb{C}^{K \times N}$. Three filters are obtained as follows:
\begin{equation}\label{cal_F}
    \mathbf{F}=\mathbf{Q}^H,
\end{equation}
\begin{equation}
\mathbf{G}=\text{diag}\{l_{11}^{-1},\cdots,l_{KK}^{-1}\},
\end{equation}
\begin{equation}\label{cal_B}
\mathbf{B}=\mathbf{G}\mathbf{L},
\end{equation}
where $l_{ij}$ is the element of $\mathbf{L}$ in row $i$ and column $j$.

Let vector $\mathbf{s}=[s_1,...,s_K]^T \in \mathbb{C}^{K \times 1}$ represent the modulated signal vector for all users. The channel signals $\mathbf{w}=[w_1,...,w_K]^T\in \mathbb{C}^{K \times 1}$ are successively generated as
\begin{equation}
\label{mod}
w_i=\mathrm{MOD}_\tau (s_i-\sum_{j=1}^{i-1}b_{ij}w_j),\quad i=1,...K,
\end{equation}
where $b_{ij}$ is the element of $\mathbf{B}$ in row $i$ and column $j$. $\mathrm{MOD_\tau(\cdot)}$ is a modulo operation that limits the amplitude of $w_i$ to the boundary of the modulation alphabet\cite{THPMOD}, expressed as
\begin{equation}
\mathrm{MOD}_\tau (w)=w- \left \lfloor \frac{\operatorname{Re}(w)}{\tau}+\frac{1}{2} \right \rfloor\tau-j \left \lfloor \frac{\operatorname{Im}(w)}{\tau}+\frac{1}{2} \right \rfloor\tau ,
\end{equation}
where $\tau$ is a constant depending on the modulation. With the modulo operation, $\mathbf{w}$ is represented as
\begin{equation}
\mathbf{w}=\mathbf{B}^{-1}\left(\mathbf{s}+\mathbf{d}\right)=\mathbf{B}^{-1}\mathbf{v},
\end{equation}
where $\mathbf{d}$ is a perturbation vector and $\mathbf{v}$ is an equivalent signal.

By left-multiplying feedforward matrix $\mathbf{F}$, $\mathbf{w}$ undergoes a spatial channel pre-equalization to obtain the transmit signal $\mathbf{z}$, as 
\begin{equation}\label{pre-equalization}
\mathbf{z}=\mathbf{F}\mathbf{w}.
\end{equation}

At the receiver side, the received signal $\mathbf{r}$ is compensated by the scaling matrix $\mathbf{G}$, which takes the form
\begin{equation}
\label{receive}
\mathbf{r}=\mathbf{G}(\mathbf{H}\mathbf{F}\mathbf{w}+\mathbf{n})=\mathbf{v}+\mathbf{G}\mathbf{n},
\end{equation}
where $\mathbf{n}\in \mathbb{C}^{K \times 1}$ is an additive white Gaussian noise vector.

\section{RRAM-based circuit for THP algorithm}
In this section, we propose an RRAM-based system architecture to reduce the computational complexity of the THP algorithm.

\subsection{Analysis of the computational complexity}
Although the THP algorithm effectively mitigates MUI in downlink channels, its computational complexity poses significant implementation challenges for the base station. As quantified in \cite{THPtime}, the total complexity is expressed as:
\begin{equation}
C_{\mathrm{THP}}=16K^3/3+8K^2N+2K^2+8KN+6K-2N-8.
\end{equation}

For the base station, this complexity arises from four primary components in the THP processing:
\begin{enumerate}
    \item LQ decomposition on the channel matrix (Eq. (\ref{LQ})).
    \item Filter computations (Eq. (\ref{cal_F})-(\ref{cal_B})).
    \item Channel signal generation (Eq. (\ref{mod})).
    \item Channel pre-equalization (Eq. (\ref{pre-equalization})).
\end{enumerate}

Let $\mathbf{C}=\mathbf{B}-\mathbf{I}_K=[\mathbf{c_1}^T,...,\mathbf{c_K}^T]^T$, $\mathbf{w}^{\text{fru}}_i=[w_1,...,w_i,0,...,0]^T$. From Eq. (\ref{mod}), the third step has the following equivalent form 
\begin{equation}
\label{MVMform_mod}
w_i=\mathrm{MOD}_\tau(s_i-\mathbf{c_i}\mathbf{w}^{\text{fru}}_{i-1}).
\end{equation}

From Eq. (\ref{cal_B}), (\ref{pre-equalization}) and (\ref{MVMform_mod}), the latter three steps of the THP algorithm can be regarded as a combination of high-complexity MVM operations and other simple computational tasks. Therefore, the computational complexity of the THP algorithm mainly comes from LQ decomposition and MVM operation. It has been proved that the RRAM array has an impressive effect to reduce the computational complexity of MVM operation to $\mathcal{O}(1)$\cite{MVM}. In this paper, we mainly focus on the RRAM-based acceleration circuit for LQ decomposition.

\subsection{Realification method}

Before presenting the proposed circuit design, we first introduce a standard technique for implementing complex-valued operations. An RRAM array performs analog computation by mapping matrix elements as nonnegative conductance values. Consequently, complex-valued matrix operations cannot be mapped directly onto a single RRAM array. We adopt the standard real-valued representation of complex linear operations in \cite{ZF_RRAM}, which converts a complex-valued system into an equivalent real-valued system. For an arbitrary complex matrix $\boldsymbol{\Psi}\in\mathbb{C}^{m\times n}$ and complex vector $\boldsymbol{\omega}\in\mathbb{C}^{n\times1}$, define their real-valued representations as
\begin{align}
    \mathcal{R}_{\mathrm m}(\boldsymbol{\Psi})
    \triangleq
    \begin{bmatrix}
        \operatorname{Re}(\boldsymbol{\Psi}) &
        -\operatorname{Im}(\boldsymbol{\Psi})\\
        \operatorname{Im}(\boldsymbol{\Psi}) &
        \operatorname{Re}(\boldsymbol{\Psi})
    \end{bmatrix},
    \label{eq:Rm_definition}
\end{align}
\begin{align}
    \mathcal{R}_{\mathrm v}(\boldsymbol{\omega})
    \triangleq
    \begin{bmatrix}
        \operatorname{Re}(\boldsymbol{\omega})\\
        \operatorname{Im}(\boldsymbol{\omega})
    \end{bmatrix}.
    \label{eq:Rv_definition}
\end{align}
Conversely, for any
$\boldsymbol{\chi}
=[\boldsymbol{\chi}_{\mathrm R}^{T},
\boldsymbol{\chi}_{\mathrm I}^{T}]^{T}$, where
$\boldsymbol{\chi}_{\mathrm R},
\boldsymbol{\chi}_{\mathrm I}\in\mathbb{R}^{n\times1}$, the inverse mapping of
$\mathcal{R}_{\mathrm v}$ is defined as
\begin{align}
    \mathcal{R}_{\mathrm v}^{-1}(\boldsymbol{\chi})
    \triangleq
    \boldsymbol{\chi}_{\mathrm R}
    +\mathrm{j}\boldsymbol{\chi}_{\mathrm I}.
    \label{eq:Rv_inverse}
\end{align}

\begin{figure*}[h]
\centering
\includegraphics[width=0.93\textwidth]{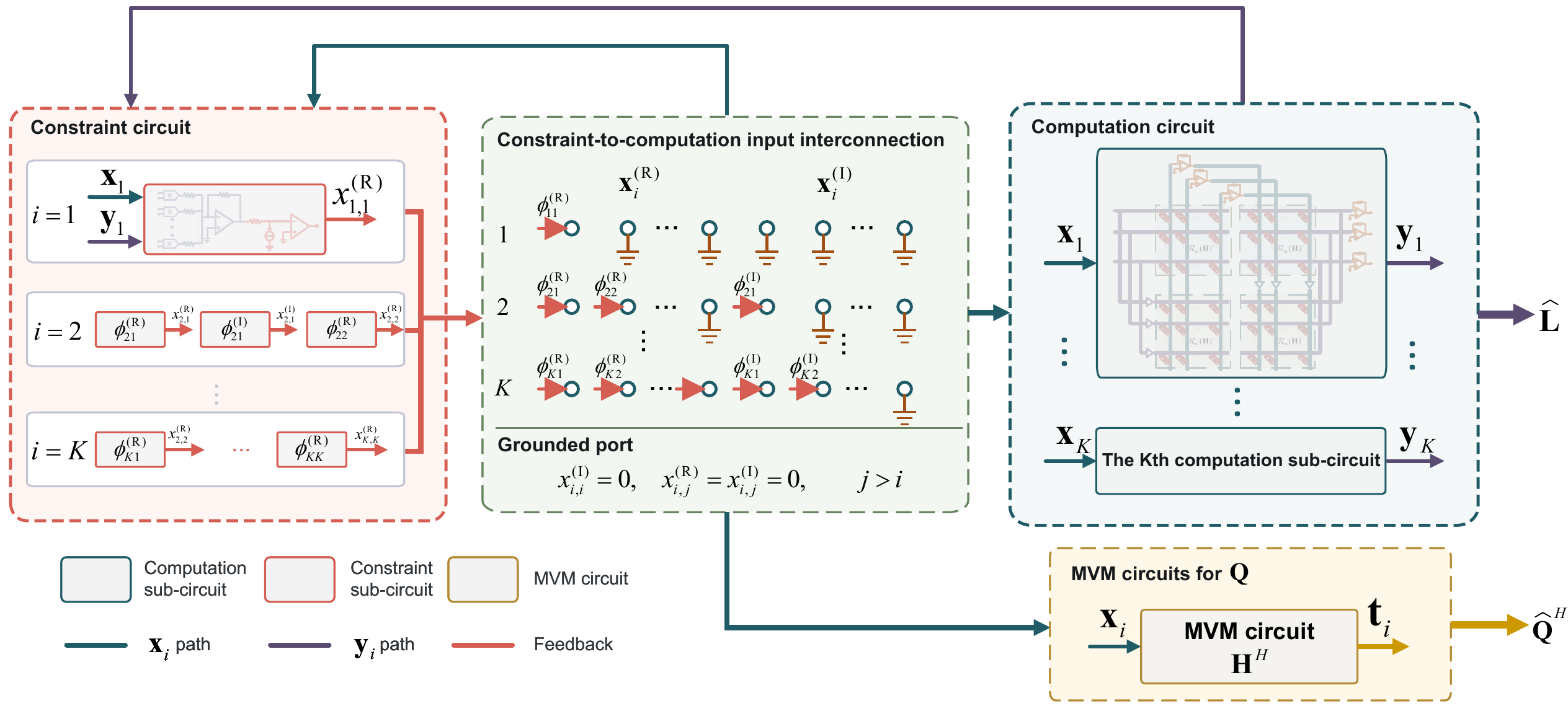}
\caption{The overall architecture of the proposed RRAM-based LQ decomposition circuit.}
\label{overallcircuit}
\end{figure*}

They satisfy the following expression
\begin{align}
    \boldsymbol{\Psi}\boldsymbol{\omega}=\mathcal{R}_{\mathrm v}^{-1}\left(\mathcal{R}_{\mathrm m}(\boldsymbol{\Psi})\mathcal{R}_{\mathrm v}(\boldsymbol{\omega})\right).\label{eq:realification_identity}
\end{align}

Eq. (\ref{eq:realification_identity}) allows a complex-valued MVM to be implemented as an equivalent real-valued MVM. $\mathcal{R}_{\mathrm m}(\boldsymbol{\Psi})$ may still contain negative entries. It is further decomposed as
\begin{align}
    \mathcal{R}_{\mathrm m}(\boldsymbol{\Psi})=\mathcal{R}_{\mathrm m}(\boldsymbol{\Psi})^{+}-\mathcal{R}_{\mathrm m}(\boldsymbol{\Psi})^{-},
    \label{eq:complex_matrix}
\end{align}
where
\begin{align}
\left[\mathcal{R}_{\mathrm m}(\boldsymbol{\Psi})^{+}\right]_{ij}&=
\max\left([\mathcal{R}_{\mathrm m}(\boldsymbol{\Psi})]_{ij},0
\right),\\
\left[\mathcal{R}_{\mathrm m}(\boldsymbol{\Psi})^{-}\right]_{ij}&=-\min\left(
[\mathcal{R}_{\mathrm m}(\boldsymbol{\Psi})]_{ij},0
\right).
\label{eq:positive_negative_parts}
\end{align}
Both $\mathcal{R}_{\mathrm m}(\boldsymbol{\Psi})^{+}$ and
$\mathcal{R}_{\mathrm m}(\boldsymbol{\Psi})^{-}$ are nonnegative and can therefore be mapped to RRAM arrays. The subtraction in Eq. (\ref{eq:complex_matrix}) is readily implementable with inverters. This real-valued formulation has also been adopted in prior RRAM-based accelerators to execute complex-valued operations \cite{ZF_RRAM,zuo2025precise,RZF_RRAM,FFT,ML} and is utilized below to implement complex matrix LQ decomposition.

\subsection{RRAM-based LQ decomposition circuit}
We propose a computation-constraint principle for an RRAM-based acceleration circuit, which may provide systematic guidance for designing nonlinear operation circuits. We deconstruct the RRAM circuit into two parts: the computation circuit and the constraint circuit. The computation circuit, typically composed of RRAM devices, is designed to store the data matrix and output target voltages efficiently in a parallel manner. Meanwhile, the constraint circuit, which is realized through additional circuit components like operational amplifiers (OAs), serves to establish nonlinear constraints for the computational quantities in the computation circuit and complete negative feedback closure. This mechanism guarantees that the output voltage converges to the desired target value.

Following this principle, we construct the RRAM-based LQ decomposition circuit. Fig. \ref{overallcircuit} presents the overall circuit. The core architecture consists of the computation circuit and the constraint circuit. The constraint circuit takes both the input and output voltage vectors of the computation circuit as its inputs. The OA outputs in the constraint circuit provide the computation-input voltage vectors through feedback paths. The voltage outputs of the computation circuit form the matrix $\hat{\mathbf{L}}$, while selected internal nodes are connected to additional RRAM-based MVM circuits to form another output matrix $\hat{\mathbf{Q}}$. These two circuit outputs will be shown to constitute a valid LQ decomposition of $\mathbf{H}$.

\begin{figure}[!htp]
\centering
\includegraphics[width=0.48\textwidth]{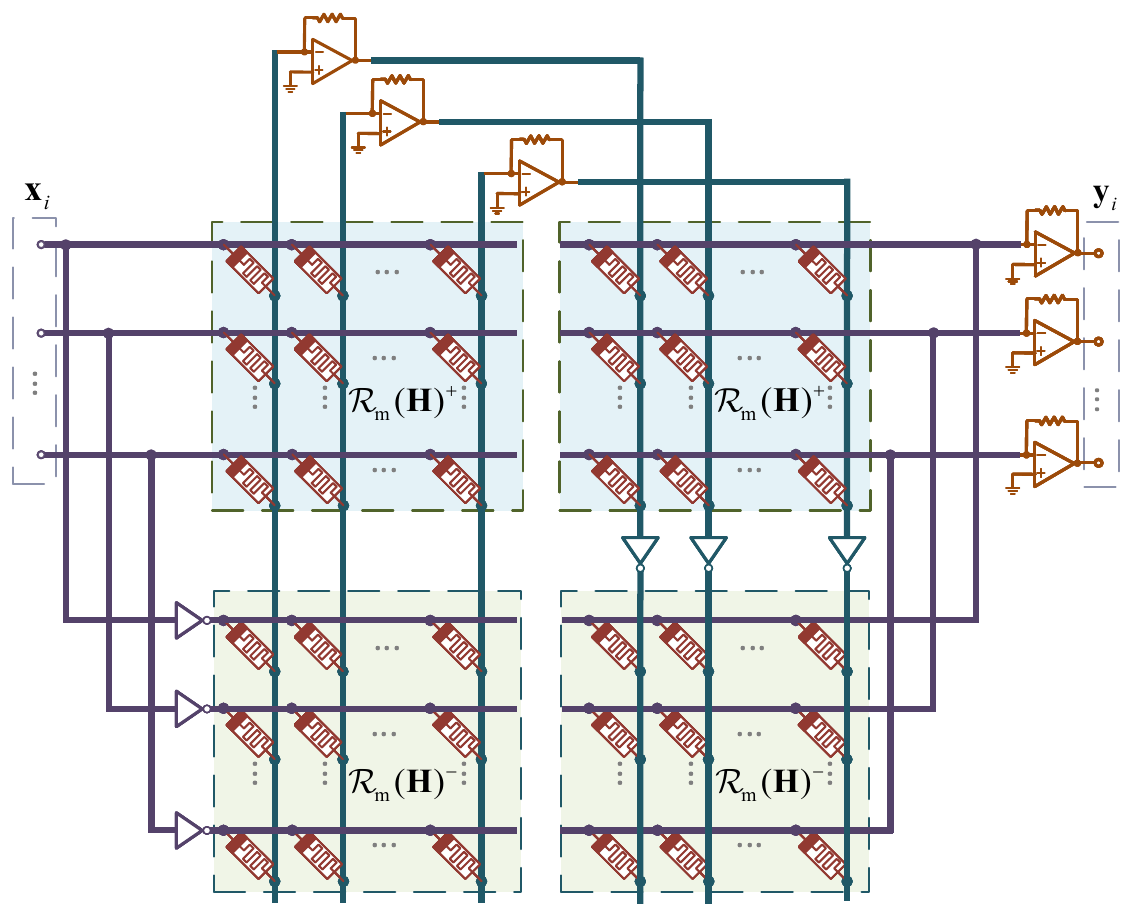}
\caption{The computation sub-circuit of the RRAM-based LQ decomposition circuit.}
\label{computation}
\end{figure}

The computation circuit contains $K$ identical sub-circuits in parallel. For convenience, the input and output voltage vector of the $i$th computation sub-circuit are denoted by
\begin{align}
    \mathbf{x}_i=\left[{\mathbf{x}_i^{(\text{R})}}^T,{\mathbf{x}_i^{(\text{I})}}^T\right]^T,\\
    \mathbf{y}_i=\left[{\mathbf{y}_i^{(\text{R})}}^T,{\mathbf{y}_i^{(\text{I})}}^T\right]^T,
\end{align}
where ${\mathbf{x}_i^{(\text{R})}},{\mathbf{x}_i^{(\text{I})},{\mathbf{y}_i^{(\text{R})}},{\mathbf{y}_i^{(\text{R})}}\in\mathbb{R}^{K\times1}}$. As shown in Fig. \ref{computation}, the $i$th computation sub-circuit perform
\begin{align}\label{y2x_real}
    \mathbf{y}_i=\mathcal R_{\mathrm m}(\mathbf{H})\mathcal R_{\mathrm m}(\mathbf{H})^T\mathbf{x}_i.
\end{align}
The output matrix $\hat{\mathbf{L}}$ is formed by
\begin{align}
    \hat{\mathbf{L}}=
    \left[\mathcal{R}_{\mathrm v}^{-1}(\mathbf{y}_{1}),\mathcal{R}_{\mathrm v}^{-1}(\mathbf{y}_{2}),\dots,\mathcal{R}_{\mathrm v}^{-1}(\mathbf{y}_{K})\right].
    \label{eq:rf_Lhat_readout}
\end{align}

The constraint circuit described next regulates $\mathbf{x}_i$ so that $\hat{\mathbf{L}}$ constitutes the desired result. First, selected ports of $\mathbf{x}_i$ are grounded. Define
\begin{align}
    \mathbf{P}=\left[\mathcal{R}_{\mathrm v}^{-1}(\mathbf{x}_{1}),\mathcal{R}_{\mathrm v}^{-1}(\mathbf{x}_{2}),\dots,\mathcal{R}_{\mathrm v}^{-1}(\mathbf{x}_{K})\right].
    \label{eq:rf_P_definition}
\end{align}
The upper-triangular structure of $\mathbf{P}$ is imposed by grounding
\begin{align}
    x_{i,j}^{(\mathrm R)}=x_{i,j}^{(\mathrm I)}=0,\qquad j>i,\label{eq:rf_x_ground}
\end{align}
where $x_{i,j}^{(\mathrm R)}$ and $x_{i,j}^{(\mathrm I)}$ denote the $j$th element of $\mathbf{x}_{i}^{(\mathrm R)}$ and $\mathbf{x}_{i}^{(\mathrm I)}$, respectively. In addition, the diagonal entries of $\mathbf{P}$ are set real-valued by grounding
\begin{align}
    x_{i,i}^{(\mathrm I)}=0.
\end{align}

The remaining ungrounded input ports are generated by an additional circuit. It is built from the sub-circuit as shown in Fig. \ref{constrain}. This sub-circuit utilizes OAs and multipliers to constrain the inner product of two voltage vectors to a prescribed value. For $\mathbf{x}_i$, we construct its constraints with $\mathbf{y}_j$ for $j\le i$
\begin{align}\begin{cases}
 \phi_{ij}^{(\text{R})}\triangleq(\mathbf{x}_i)^T\mathbf{y}_j=\delta_{ij},\\
    \phi_{ij}^{(\text{I})}\triangleq(\mathbf{J}_K\mathbf{x}_i)^T\mathbf{y}_j=0,
\end{cases}\label{eq:rf_real_constraints}
\end{align}
where $\delta_{ij}$ denotes the Kronecker delta and $\mathbf{J}_K$ is given by
\begin{align}
    \mathbf{J}_{K}
    =
    \begin{bmatrix}
        \mathbf{0}&-\mathbf{I}_{K}\\
        \mathbf{I}_{K}&\mathbf{0}
    \end{bmatrix}.
    \label{eq:pre_J_definition}
\end{align}

\begin{figure}[h]
\centering
\includegraphics[width=0.48\textwidth]{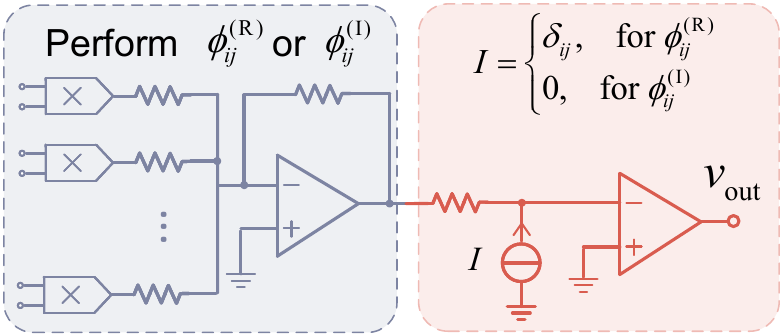}
\caption{The constraint sub-circuit of the RRAM-based LQ decomposition acceleration circuit.}
\label{constrain}
\end{figure}

We note that the voltage vector $\mathbf{J}_K\mathbf{x}_i$ is already available within the circuit, as the inverters are used in the computation circuit to realize the subtraction operation via $-\mathbf{x}_i$. Constraint sub-circuits in Eq. (\ref{eq:rf_real_constraints}) together represent one complex inner product as
\begin{align}\label{eq:rf_phi_complex_meaning}
\mathcal{R}_{\text{v}}^{-1}\left([\phi_{ij}^{(\text{R})},\phi_{ij}^{(\text{I})}]^T\right)=\left(\mathcal{R}_\text{v}^{-1}(\mathbf{x}_i)\right)^H\mathcal{R}_\text{v}^{-1}(\mathbf{y}_j)=\delta_{ij}.
\end{align}

For $j<i$, both $\phi_{ij}^{(\mathrm R)}$ and $\phi_{ij}^{(\mathrm I)}$ are implemented, and their OA outputs are fed back to provide $x_{i,j}^{(\mathrm R)}$ and $x_{i,j}^{(\mathrm I)}$, respectively. For $j=i$, only $\phi_{ii}^{(\mathrm R)}$ is implemented to provide $x_{i,i}^{(\mathrm R)}$ since $x_{i,i}^{(\mathrm I)}$ is grounded.
From Eq. (\ref{y2x_real}) and (\ref{eq:rf_phi_complex_meaning}), the whole constraint circuit imposes
\begin{align}
    \mathbf{P}^H\mathbf{H}\mathbf{H}^H\mathbf{P}=\mathbf{I}_K.
\end{align}
Combining the upper-triangular structure of $\mathbf{P}$, we obtain
\begin{align}
    \mathbf{P}=(\mathbf{L}^H)^{-1}\boldsymbol{\Theta}_{\pm},
\end{align}
where $\boldsymbol{\Theta}_{\pm}$ is a diagonal sign-ambiguity matrix, given by 
\begin{align}
\boldsymbol{\Theta}_{\pm}
=
\operatorname{diag}\{\theta_{1},\ldots,\theta_{K}\},
\qquad
\theta_i\in\{-1,+1\}.
\end{align}
Therefore, we can derive that output matrix $\hat{\mathbf{L}}$ is given by
\begin{align}
\hat{\mathbf{L}}=\mathbf{H}\mathbf{H}^{H}\mathbf{P}=\mathbf{L}\boldsymbol{\Theta}_{\pm}.
    \label{eq:rf_Lhat_solution}
\end{align}

Finally, $\mathbf{x}_{i}$ is routed to an additional RRAM-based MVM circuit storing $\mathcal{R}_{\mathrm m}(\mathbf{H}^{H})$ to perform
\begin{equation}
   \mathbf{t}_{i}=\mathcal{R}_{\mathrm m}(\mathbf{H}^{H})\mathbf{x}_{i},
    \label{eq:rf_ti}
\end{equation}
and $\hat{\mathbf{Q}}^H$ is formed by
\begin{equation}
    \hat{\mathbf{Q}}^{H}
    =
    \left[
    \mathcal{R}_{\mathrm v}^{-1}(\mathbf{t}_{1}),
    \ldots,
    \mathcal{R}_{\mathrm v}^{-1}(\mathbf{t}_{K})
    \right]=\mathbf{Q}^{H}\boldsymbol{\Theta}_{\pm}.\label{eq:rf_Qhat_solution}
\end{equation}

For standard LQ decomposition, the diagonal entries of $\mathbf{L}$ are usually chosen as positive real values. The sign ambiguity in $\boldsymbol{\Theta}_{\pm}$ does not change the lower-triangular structure of $\hat{\mathbf{L}}$ and the row orthogonality of $\hat{\mathbf{Q}}$. Therefore, the obtained matrices $\hat{\mathbf{L}}$ and $\hat{\mathbf{Q}}$ are equivalent for LQ decomposition and the THP algorithm.

\begin{figure}[h]
\centering
\includegraphics[width=0.45\textwidth]{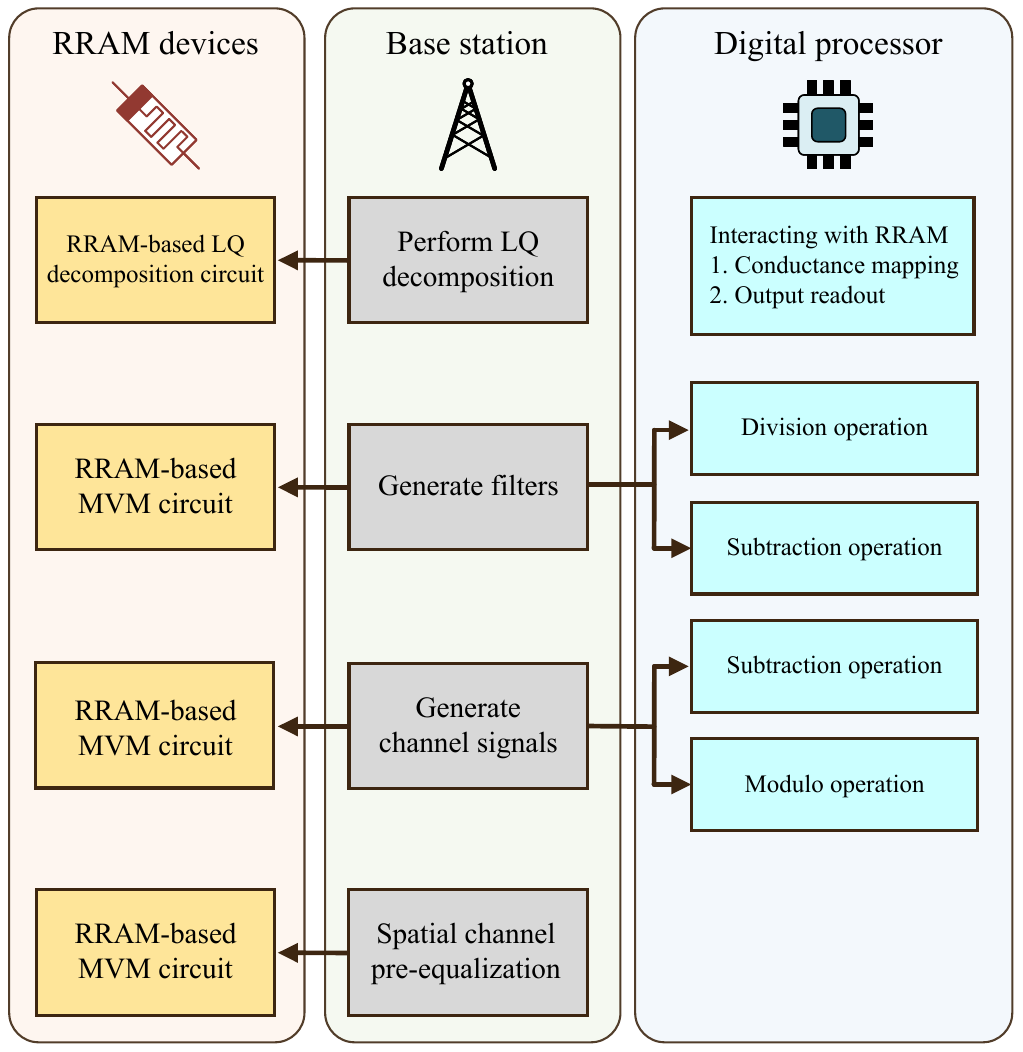}
\caption{System architecture of RRAM-based THP algorithm.}
\label{RRAM_THP}
\end{figure}

Fig. \ref{RRAM_THP} illustrates the system architecture of the RRAM-based THP algorithm. The RRAM devices are employed to perform LQ decomposition and MVM operations. The digital processors are utilized to interact with RRAM devices and handle simple, low-complexity computations. By leveraging RRAM devices to accelerate high-complexity operations, the RRAM-based THP algorithm achieves a significant reduction in computational complexity for base stations, which will be detailed in Sec.\uppercase\expandafter{\romannumeral5}.

% Except for the module operation in Eq. (\ref{mod}), 

\section{Impact of the Bit Precision}
In practical systems, the conductance of a memristor is quantized into a finite number of distinguishable states. This approach counteracts random errors caused by non-idealities by ensuring sufficient spacing between different conductance states\cite{liu2026memristor,Cao2020quantiRRAM}. In this case, quantization error becomes the dominant source of error and limits system performance. In this section, we analyze the SINR performance and achievable rate performance with the quantized RRAM devices. To derive the SINR performance, we first introduce a key preliminary finding.
\begin{lemma}\label{lemma1}
Let $\{x_i\}_{i=1}^t$ and $\{y_i\}_{i=1}^t$ be two sequences, each consisting of i.i.d. random variables. $\mathbb{E}(y_i)\neq0$. Define $X=\sum_{i=1}^{t} x_i$ and $Y=\sum_{i=1}^{t} y_i$. We obtain the following approximation,
\begin{equation}\label{lemma1_eq}
      \mathbb{E}\left(\frac{X}{Y}\right)\approx  \frac{\mathbb{E}(X)}{\mathbb{E}(Y)}.
\end{equation}
\end{lemma}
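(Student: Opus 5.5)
The plan is a second-order Taylor (delta-method) expansion of the ratio around the means, together with the observation that the correction terms are small relative to the leading term as $t$ grows. Let $\mu_x=\mathbb{E}(x_i)$, $\mu_y=\mathbb{E}(y_i)\neq 0$, $\sigma_x^2$, $\sigma_y^2$ the variances and $\sigma_{xy}=\mathrm{Cov}(x_i,y_i)$. The index $i$ pairs the two sequences, and we allow dependence between $x_i$ and $y_i$. Then $\mathbb{E}(X)=t\mu_x$, $\mathbb{E}(Y)=t\mu_y$, $\mathrm{Var}(Y)=t\sigma_y^2$ and $\mathrm{Cov}(X,Y)=t\sigma_{xy}$.

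First, write $X=\mathbb{E}(X)+\Delta_X$ and $Y=\mathbb{E}(Y)+\Delta_Y$. Expand $f(X,Y)=X/Y$ around $(\mathbb{E}(X),\mathbb{E}(Y))$ up to second order:
\begin{equation*}
\frac{X}{Y}\approx\frac{\mathbb{E}(X)}{\mathbb{E}(Y)}+\frac{\Delta_X}{\mathbb{E}(Y)}-\frac{\mathbb{E}(X)\Delta_Y}{\mathbb{E}(Y)^2}-\frac{\Delta_X\Delta_Y}{\mathbb{E}(Y)^2}+\frac{\mathbb{E}(X)\Delta_Y^2}{\mathbb{E}(Y)^3}.
\end{equation*}
Taking expectations removes the first-order terms, so
\begin{equation*}
\mathbb{E}\left(\frac{X}{Y}\right)\approx\frac{\mathbb{E}(X)}{\mathbb{E}(Y)}\left(1-\frac{\sigma_{xy}}{t\,\mu_x\mu_y}+\frac{\sigma_y^2}{t\,\mu_y^2}\right).
\end{equation*}
If $\mu_x=0$, the correction term is instead $-\sigma_{xy}/(t\mu_y^2)$, which is still $\mathcal{O}(1/t)$.

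Second, note that the bracketed corrections are $\mathcal{O}(1/t)$ with constants fixed by the single-term moments. Hence the relative error of (\ref{lemma1_eq}) vanishes as $t$ grows. This matches the intended use, where $t$ scales with $N$ or $K$. Independently of $t$, the corrections are also negligible whenever the coefficient of variation $\sigma_y/|\mu_y|$ is small.

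Third, to justify the heuristic, I would note that by the law of large numbers $Y/t\to\mu_y\neq 0$, so $Y$ stays away from zero with high probability. The remainder of the Taylor expansion is then controlled by third-order moments divided by $t^2$.

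The main obstacle is rigor. $\mathbb{E}(X/Y)$ may not even exist if $Y$ has positive density near zero, for example when $y_i$ is Gaussian. Because the lemma claims only an approximation, I would handle this by stating the result as asymptotic equivalence conditional on the high-probability event $\{|Y-t\mu_y|<t|\mu_y|/2\}$. On that event $X/Y$ is bounded, the expansion converges, and the neglected tail contributes a vanishing amount. I would not attempt a uniform bound beyond this.
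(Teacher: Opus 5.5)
Your proposal is correct and follows essentially the paper's argument: a second-order Taylor (delta-method) expansion of $X/Y$ about $(\mathbb{E}(X),\mathbb{E}(Y))$, whose covariance and variance correction terms are $\mathcal{O}(1/t)$ and are therefore dropped for large $t$. Your $\operatorname{Var}(Y)$ coefficient $\mathbb{E}(X)/\mathbb{E}(Y)^3$ is the correct one. The paper's appendix writes $2\mu_X/\mu_Y^3$, omitting the $\tfrac12$ of the Taylor term, but this does not change the $\mathcal{O}(1/t)$ conclusion. Your remark on when $\mathbb{E}(X/Y)$ actually exists is extra care that the paper does not take.
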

\begin{proof}
The proof is available in Appendix \ref{ProofLemma1}. 
\end{proof} 

Note that the proof of \textbf{Lemma 1} does not require any assumptions regarding the correlation between $x_i$ and $y_i$. The approximation in \textbf{Lemma 1} becomes more accurate when $t$ increases. In subsequent applications of \textbf{Lemma 1}, $t$ is equivalent to the number of antennas, which makes it highly precise in the massive MIMO system.

\subsection{Quantization error analysis}
In this subsection, we quantitatively analyze the impact of RRAM-induced quantization errors in communication systems. When the bit precision is limited, the matrix stored in the RRAM array is quantized. MVM operation is a linear operation that satisfies the distributive law.
By decomposing the matrix into lower-bit slices and accumulating the result, high-accuracy MVM is achievable for an RRAM circuit of low bit precision \cite{shafiee2016isaac}. 
However, for LQ decomposition, the inherently nonlinear nature prevents the application of this shift-and-add technique to enhance its precision. Consequently, the quantization error introduced by the RRAM-based LQ decomposition circuit cannot be readily mitigated. The quantization error analysis in this work focuses primarily on the RRAM-based LQ decomposition circuit.

To begin the analysis, we first define the quantization scheme adopted for memristor devices. Following the widely-adopted modeling approach in \cite{ZF_RRAM, RZF_RRAM}, we assume that memristive devices are uniformly quantized to obtain discrete and well-separated conductance states.
\begin{assumption}
Let $b$ represent the bit precision for a memristor. The conductance $G$ of a $b$-bit memristor takes $2^b-1$ uniform discrete states and a deep high resistance state
\begin{align}
    G=k\triangle G_0,\quad k\in\{0,1,\dots,2^b-1\},
\end{align}
where $\triangle G_0$ is a reference unit conductance.
\end{assumption}

The uniform quantization scheme adopted in \textbf{Assumption 1} is widely used to mitigate the impact of analog non-idealities \cite{li2018analogue,Cao2020quantiRRAM}. Notably, when employing robust and sufficiently distinguishable conductance states, quantization error becomes the dominant impairment. Accordingly, we restrict our attention to the impact of quantization error in this work.

Let $x$ represent the real part of the channel matrix element, or equivalently, the imaginary part of the channel matrix element. When using the method of Eq. (\ref{eq:Rm_definition}) to represent complex matrices, $x$ is equivalently quantized as
\begin{equation}\label{qx}
\bar{x}=\begin{cases}
nd&x\in((n-\frac12)d,(n+\frac12)d),\\(2^b-1)d&x>(2^b-\frac12)d,\\-(2^b-1)d&x<(-2^b+\frac12)d,\end{cases}
\end{equation}
where $n\in\{-2^b+1,\dots,2^b-1\}$, $\bar{x}$ denotes the quantized values of $x$ and $d$ denotes the quantization interval.

Next, we characterize the quantization error distribution by exploiting the statistical structure of Rayleigh fading channels. Let $\tilde{x}$ denote the quantization error of $x$, given by
\begin{equation}
    \tilde{x}=x-\bar{x}.
\end{equation}

Let $f_b(\tilde{x})$ denote the probability density function (PDF) of quantization errors when the bit precision of memristors is $b$. In \textbf{Lemma 2}, we deduce that $\tilde{x}$ is approximately uniformly distributed.

\begin{lemma}\label{lemma2}
When a $b$-bit RRAM device is utilized to store a value $x$ that follows a zero-mean Gaussian distribution with variance $\sigma^2$ and the quantization interval $d\in \left [\sigma/(2^b-1),\sigma\right]$, the quantization error $\tilde{x}$ is almost uniformed distributed in $[-\frac{d}{2},\frac{d}{2}]$. For $|\tilde{x}|<\frac{1}{2}d$, the relative discrepancy between $\tilde{x}$ and that of a uniformly distributed variable in PDF is upper bounded as
\begin{equation}
\frac{\left |\frac{1}{d}- f_b(\tilde{x})\right|}{\frac{1}{d}} \le \left(\mathrm{erfc}\left(\frac{(2^b-1)d}{\sqrt{2} \sigma}\right)+10^{-7}\right),
\end{equation}
where $\mathrm{erfc}(x)$ is complementary error function.
\end{lemma}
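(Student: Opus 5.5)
The approach is to write the error density as a sum over quantization cells and split it into two parts. The first comes from the granular region $|x|<(2^b-\frac12)d$. The second comes from the overload region $|x|>(2^b-\frac12)d$, where the clipping in Eq. (\ref{qx}) produces an error that is not confined to $[-\frac d2,\frac d2]$. For $|\tilde{x}|<\frac d2$, a sample with error $\tilde{x}$ in the granular region must have come from $x=nd+\tilde{x}$ for some $n\in\{-2^b+1,\dots,2^b-1\}$. Writing $\varphi_\sigma$ for the $\mathcal{N}(0,\sigma^2)$ density, this gives
\begin{equation*}
f_b(\tilde{x})=\sum_{n=-(2^b-1)}^{2^b-1}\varphi_\sigma(nd+\tilde{x}),
\end{equation*}
where the overload samples carry errors of magnitude larger than $\frac d2$ and contribute nothing on $(-\frac d2,\frac d2)$. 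The bound then reduces to comparing this finite sum with $\frac1d$. I split the discrepancy as
\begin{equation*}
\Big|\tfrac1d-f_b(\tilde{x})\Big|\le\Big|\tfrac1d-\sum_{n\in\mathbb{Z}}\varphi_\sigma(nd+\tilde{x})\Big|+\sum_{|n|\ge 2^b}\varphi_\sigma(nd+\tilde{x}).
\end{equation*}

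\emph{Step 1 (periodization error).} I would apply the Poisson summation formula to the infinite sum:
\begin{equation*}
\sum_{n\in\mathbb{Z}}\varphi_\sigma(nd+\tilde{x})=\frac1d\sum_{k\in\mathbb{Z}}e^{-2\pi^2\sigma^2k^2/d^2}\cos\!\left(\frac{2\pi k\tilde{x}}{d}\right).
\end{equation*}
The relative deviation from $\frac1d$ is therefore at most $2\sum_{k\ge1}e^{-2\pi^2k^2\sigma^2/d^2}$. Because $d\le\sigma$, this is at most $2\sum_{k\ge 1}e^{-2\pi^2k^2}\approx 2e^{-2\pi^2}\approx 5.4\times10^{-9}$, which is well below $10^{-7}$. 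This step is what produces the constant $10^{-7}$.

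\emph{Step 2 (tail truncation).} This term is where I expect the main difficulty, because the tail sum has to be matched to $\mathrm{erfc}$ with constant exactly $1$. The plan is to compare the discrete tail with an integral. Write $M=2^b-1$. For $n\ge M+1$ and $|\tilde{x}|<\frac d2$, the point $nd+\tilde{x}$ lies to the right of $(n-1)d+\frac d2\ge(M+\frac12)d$. Since the Gaussian is decreasing there, each term satisfies $d\,\varphi_\sigma(nd+\tilde{x})\le\int_{(n-1)d+\tilde{x}}^{nd+\tilde{x}}\varphi_\sigma$. Summing over $n$ gives
\begin{equation*}
d\sum_{n\ge M+1}\varphi_\sigma(nd+\tilde{x})\le \int_{Md+\tilde{x}}^{\infty}\varphi_\sigma.
\end{equation*}
The negative tail is handled symmetrically with $-\tilde{x}$. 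Adding the two one-sided tails and bounding the worst case over $\tilde{x}$ gives a total at most $P(|x|>Md-\tfrac d2)$. To land exactly on $\mathrm{erfc}\big(Md/(\sqrt2\sigma)\big)=P(|x|>Md)$, I would use the pairing $\int_{Md+\tilde{x}}^\infty+\int_{Md-\tilde{x}}^\infty\le 2\int_{Md}^{\infty}$ plus a convexity correction. Convexity of the Gaussian tail on $[Md-\frac d2,\infty)$ holds when $Md\gtrsim\sigma$, which the range $d\ge\sigma/M$ guarantees. Any leftover slack can be absorbed into the $10^{-7}$ term. If the convexity argument fails near the endpoint $d=\sigma/M$, I would fall back on a midpoint-rule estimate, using that $\varphi_\sigma$ is convex beyond $\sigma$.

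\emph{Step 3 (combining).} Multiplying both pieces by $d$ and adding them gives the stated bound on the relative discrepancy. Because the right-hand side is small over the whole range $d\in[\sigma/(2^b-1),\sigma]$, this also justifies the claim that $\tilde{x}$ is almost uniformly distributed on $[-\frac d2,\frac d2]$.
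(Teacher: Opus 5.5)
Your split into a periodization error plus truncated tails is the same as the paper's, and your Step 1 is fine. It is in fact better than the paper, which simply cites a reference for $|\delta_\infty(\tilde{x})|\le 10^{-7}$ when $d\le\sigma$; your Poisson-summation computation proves it, with about $5.4\times10^{-9}$.

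The gap is in the main line of Step 2. The rectangle bound from monotonicity loses half a cell and only yields $P(|x|>Md-\frac{d}{2})$. The pairing you propose to recover it, $\int_{Md+\tilde{x}}^\infty\varphi_\sigma+\int_{Md-\tilde{x}}^\infty\varphi_\sigma\le 2\int_{Md}^\infty\varphi_\sigma$, is false. The tail $T(a)=\int_a^\infty\varphi_\sigma$ has $T''(a)=a\varphi_\sigma(a)/\sigma^2>0$ for every $a>0$, so Jensen gives $T(Md+\tilde{x})+T(Md-\tilde{x})\ge 2T(Md)$, strictly for $\tilde{x}\neq0$. For example, with $\sigma=1$, $b=2$, $d=1/3$, $\tilde{x}=0.15$ one gets $0.323>0.317$. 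Tail convexity holds on all of $[0,\infty)$ and works against you; it is not the convexity that needs $Md\gtrsim\sigma$.

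The leftover also cannot be absorbed into $10^{-7}$. The slack $P(Md-\frac{d}{2}<|x|\le Md)$ is of order $d\,\varphi_\sigma(Md)$, first order in $d$, and is about $0.087$ at $b=2$, $d=\sigma/3$. Even the exact worst case of your two one-sided integrals exceeds $2T(Md)$ by about $0.007$ there.

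The missing idea is the one you list only as a fallback. It is needed over the whole range of $d$, not just near $d=\sigma/M$, and it is precisely the paper's argument.

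First bound every tail term by its worst case: $\varphi_\sigma(nd+\tilde{x})\le\varphi_\sigma(|n|d-\frac{d}{2})$ for $|n|\ge 2^b$ and $|\tilde{x}|<\frac{d}{2}$. Note that $|n|d-\frac{d}{2}$ is the midpoint of $[(|n|-1)d,|n|d]$.

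Since $\varphi_\sigma$ is convex on $[\sigma,\infty)$ and $(|n|-1)d\ge Md\ge\sigma$, the midpoint (Hermite--Hadamard) inequality gives $d\,\varphi_\sigma(|n|d-\frac{d}{2})\le\int_{(|n|-1)d}^{|n|d}\varphi_\sigma$. This is exactly where the hypothesis $d\ge\sigma/(2^b-1)$ enters. Monotonicity alone only gives the integral over that cell shifted half a step to the left.

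Summing over both tails gives $d\sum_{|n|\ge 2^b}\varphi_\sigma(nd+\tilde{x})\le 2\int_{Md}^\infty\varphi_\sigma=\mathrm{erfc}\left(\frac{Md}{\sqrt{2}\sigma}\right)$ with nothing left over. Combined with your Step 1, this proves the lemma.
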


\begin{proof}
The proof is available in Appendix \ref{ProofLemma2}.
\end{proof}

Let the quantization interval $d$ be parameterized as
\begin{align}
    d=\frac{\zeta\sigma}{2^b},
\end{align}
where $\zeta>0$ is the scaling factor that controls the quantization range. From Eq. (\ref{qx}), the circuit is able to represent values within $[-\zeta\sigma, \,\zeta\sigma]$ in a quantized form. For a fixed bit precision $b$, a larger $\zeta$ covers a wider range of channel values, but at the cost of a larger quantization error within the range. Although no universal closed-form expression exists for the optimal scaling factor\cite{q_d}, the value of $\zeta$ can be optimized to minimize the mean squared error, as discussed in \cite{Sakr2021TSP}. In this work, we adopt the four-sigma rule of thumb to set $\zeta=4$ as a fixed and commonly used baseline\cite{foursigmaTIT}. This range covers approximately $99.994\%$ of channel values while maintaining a bounded quantization error. The analysis method can be extended to other values of $\zeta$. Under this setting, \textbf{Lemma 2} holds for all $b\ge 2$. Therefore, the relative discrepancy between $\tilde{x}$ and that of a uniformly distributed variable in PDF is upper bounded as
\begin{align}
\frac{\left |\frac{1}{d}- f_b(\tilde{x})\right|}{\frac{1}{d}}\le \mathrm{erfc}\left(\frac{(2^b-1)d}{\sqrt{2} \sigma}\right)+10^{-7}\nonumber\\\approx  \mathrm{erfc}\left(\frac{4}{\sqrt{2} }\right)+10^{-7}\le 10^{-4}.
\end{align}
Therefore, it is feasible to approximate the quantization error $\tilde{x}$ as a uniformed distributed variable in $[-\frac{d}{2},\frac{d}{2}]$. 

Let $\bar{\mathbf{H}}$ denote the quantized channel matrix and $\tilde{\mathbf{H}}$ denote the matrix of quantization error. When RRAM devices are employed to store the real part and the imaginary part of the element in the channel matrix, every quantized element $\bar{h}_{ij}$ in $\bar{\mathbf{H}}$ has its real part and imaginary part quantized as Eq. (\ref{qx}) with the quantization interval $d=4\sigma/2^b=2\sqrt{2}\sigma_h/2^b$. The matrix of quantization error $\tilde{\mathbf{H}}$ is given by
\begin{align}
\tilde{\mathbf{H}}=\mathbf{H}-\bar{\mathbf{H}}.
\end{align}

From \textbf{Lemma 2}, we can assume every element $\tilde{h}_{ij}$ in matrix $\tilde{\mathbf{H}}$ has its real part and imaginary part uniformly distributed in $[-\frac{d}{2},\frac{d}{2}]$.

\subsection{SINR performance analysis}

In this subsection, we consider the SINR performance of the RRAM-based THP algorithm. For simplicity, we set that the power allocated to all users is equal. From \cite{QuantizaedTHP}, a standard assumption for $\mathbf{w}$ is that the covariance of $\mathbf{w}$ is approximated as $\mathbf{R_w}=\frac{M}{M-1}\mathbf{I}_K$ for the $M$-ary QAM modulation. The precoding loss $\frac{M}{M-1}$ can be considered negligible as $M$ increases. Thus, we ignore the precoding loss and assume that $\mathbb{E}(\mathbf{s}^H\mathbf{s})=\mathbb{E}(\mathbf{w}^H\mathbf{w})$. 

First, we consider the ideal THP algorithm. Let $P_k$ denote the transmit power allocated to the $k$th user. From \cite{lkk}, the expectation of SNR of the $k$th user is given by
\begin{align}\label{withoutQ}
   \mathbb{E}\left (\text{SNR}_k \right )= \frac{(N-k+1)\sigma_h^2P_k}{N_0}.
\end{align}

Note that the SNR performance is affected by the order of the user’s channel vector in $\mathbf{H}$, i.e., $k$. We define $\gamma$ to represent the normalized SNR of the received signal that only encapsulates the effects of channel and transmit power, expressed as
\begin{align}
\label{gamma}
\gamma=\frac{\sigma^2_hP_k}{N_0}.
\end{align}

For clarity, we may use $\gamma$ as an equivalent alternative to the SNR of the received signal in the following analysis.

When the bit precision of the RRAM devices is limited, the proposed circuit performs LQ decomposition on the channel matrix represented by quantized conductance states. It should be noted that, in practical implementations, the computation results may further deviate from this quantized model due to circuit-level non-idealities\cite{Errorbuild}, such as wire parasitics and data converter imperfections. The modeling and compensation of these non-idealities have been investigated in relevant studies \cite{THU22RRAM,MinADC2026} and are also being explored in our follow-up work \cite{zyhOFDMtosubmit}. In this paper, we make idealized assumptions on these effects to focus on the impact of the finite bit precision of the RRAM devices. Accordingly, the RRAM circuit is modeled as performing
\begin{equation}
\bar{\mathbf{H}}=\mathbf{L}_{q}\mathbf{Q}_{q},
\end{equation}
where $(\cdot)_q$ denotes the computational results obtained by the quantized RRAM circuit. The received signal is written as
\begin{equation}
\begin{aligned}
\label{r_q}
\mathbf{r}=\mathbf{v} + \mathbf{G}_q  \tilde{\mathbf{H}}  \mathbf{F}_q \mathbf{w} + \mathbf{G}_q \mathbf{n}.
\end{aligned}
\end{equation}

Let $\rho_k$ denote the SINR for the $k$th user. From Eq. (\ref{r_q}), $\rho_k$ is expressed as \cite{QuantizaedTHP}
\begin{equation}
\rho_k=\frac{\gamma\bar{l}_{kk}^2}{\gamma\left\|\tilde{\mathbf{h}}_k\mathbf{F}_q\right \|_2^2+\sigma_h^2},
\end{equation}
where $\tilde{\mathbf{h}}_k$ denotes the $k$th row vector of $\tilde{\mathbf{H}}$ and $\bar{l}_{kk}$ denotes the element of $\mathbf{L}_q$ in row $k$ and column $k$. 
The SINR performance is limited by the quantization error.

Let $\rho_k^K$ denote the SINR of the $k$th user and $\mathbf{F}_{q,K}$ denote matrix $\mathbf{F}_{q}$ when the number of users is $K$. For the RRAM-based THP algorithm, as $K$ increases from $K_1$ to $K_2$ $(K_1<K_2)$, $\rho_k^{K_1}$ and $\rho_k^{K_2}$ have the following expression and relationship
\begin{equation}\label{lower}
\rho_k^{K_1}=\frac{\gamma\bar{l}_{kk}^2}{\gamma\left\|\tilde{\mathbf{h}}_k\mathbf{F}_{q,K_1}\right \|_2^2+\sigma_h^2}
    \ge
    \rho_k^{K_2}=\frac{\gamma\bar{l}_{kk}^2}{\gamma\left\|\tilde{\mathbf{h}}_k\mathbf{F}_{q,K_2}\right \|_2^2+\sigma_h^2}.
\end{equation}
The proof is available in Appendix \ref{Proofremark}.

It indicates that the presence of additional users leads to a decline in the SINR performance of the $k$th receiver. The SINR performance drops to its lower bound when the number of users reaches its maximum value, i.e., $K=N$. Its lower bound takes the form
\begin{equation}
\label{SINR_lower}
\breve{\rho}_k  = \frac{\gamma \bar{l}_{kk}^2}{\gamma \tilde{\mathbf{h}}_k\tilde{\mathbf{h}}_k^H+\sigma_h^2} .
\end{equation}

\begin{theorem}\label{Theorem1}
When the bit precision of RRAM devices is $b$, the expectation of SINR for the $k$th user is lower bounded as follows
\begin{equation}
\label{pro_eq}
        \mathbb{E}(\rho_k)  \ge \frac{(N-k+1)\gamma }{\frac{N}{3}\frac{\gamma}{2^{2b-2}} + 1} .
\end{equation}

The lower bound is attained when the number of users is equal to the number of antennas, i.e., $K=N$.
\end{theorem}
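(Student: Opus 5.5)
The plan is to start from the monotonicity relation in Eq. (\ref{lower}). It shows that $\rho_k\ge\breve{\rho}_k$ for every $K\le N$, with equality at $K=N$. It therefore suffices to lower bound $\mathbb{E}(\breve{\rho}_k)$ in Eq. (\ref{SINR_lower}) and then show that this bound equals the right-hand side of Eq. (\ref{pro_eq}). The claim about attainment at $K=N$ comes directly from the same relation, since at $K=N$ the matrix $\mathbf{F}_q$ is square unitary, so $\|\tilde{\mathbf{h}}_k\mathbf{F}_q\|_2^2=\tilde{\mathbf{h}}_k\tilde{\mathbf{h}}_k^H$.

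Next, I would write $\breve{\rho}_k$ as a ratio of sums over the $N$ antenna indices so that \textbf{Lemma 1} applies. The numerator $\gamma\bar{l}_{kk}^2$ is the squared norm of the component of $\bar{\mathbf{h}}_k$ orthogonal to $\bar{\mathbf{h}}_1,\dots,\bar{\mathbf{h}}_{k-1}$. The denominator $\gamma\sum_{j=1}^N|\tilde{h}_{kj}|^2+\sigma_h^2$ is a sum of $N$ i.i.d. terms; the constant $\sigma_h^2$ can be split evenly as $\sigma_h^2/N$ per term. Invoking \textbf{Lemma 1} gives
\begin{equation*}
\mathbb{E}(\breve{\rho}_k)\approx\frac{\gamma\,\mathbb{E}(\bar{l}_{kk}^2)}{\gamma\,\mathbb{E}(\tilde{\mathbf{h}}_k\tilde{\mathbf{h}}_k^H)+\sigma_h^2}.
\end{equation*}

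For the denominator I would use \textbf{Lemma 2}. Each real and imaginary part of $\tilde{h}_{kj}$ is approximately uniform on $[-d/2,d/2]$, so it has variance $d^2/12$, and each entry contributes $d^2/6$. With $d=2\sqrt{2}\sigma_h/2^b$ this gives
\begin{equation*}
\mathbb{E}(\tilde{\mathbf{h}}_k\tilde{\mathbf{h}}_k^H)=\frac{Nd^2}{6}=\frac{N\sigma_h^2}{3\cdot 2^{2b-2}}.
\end{equation*}

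For the numerator I would use Eq. (\ref{withoutQ}) applied to the quantized matrix $\bar{\mathbf{H}}$. By the Bartlett-type argument behind Eq. (\ref{withoutQ}), $\mathbb{E}(\bar{l}_{kk}^2)=(N-k+1)\sigma_{\bar h}^2$, where $\sigma_{\bar h}^2$ is the per-entry variance of $\bar{\mathbf{H}}$. This is the main obstacle, because $\bar{\mathbf{H}}$ is not exactly Gaussian and its variance is not exactly $\sigma_h^2$. Under the uniform-error model the error is approximately uncorrelated with $\mathbf{H}$, which gives $\sigma_{\bar h}^2\approx\sigma_h^2-d^2/6$. Since $d^2/6\ll\sigma_h^2$ for $b\ge2$, I would take $\mathbb{E}(\bar{l}_{kk}^2)\approx(N-k+1)\sigma_h^2$. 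I would justify this by noting that $\bar{\mathbf{H}}$ has i.i.d. rotationally symmetric entries, so the projection-residual argument carries over with a negligible variance correction.

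Substituting both expectations and dividing numerator and denominator by $\sigma_h^2$ gives
\begin{equation*}
\mathbb{E}(\rho_k)\ge\mathbb{E}(\breve{\rho}_k)\approx\frac{(N-k+1)\gamma}{\frac{N}{3}\frac{\gamma}{2^{2b-2}}+1},
\end{equation*}
which is Eq. (\ref{pro_eq}), with equality at $K=N$.
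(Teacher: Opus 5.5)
Your reduction to $\breve{\rho}_k$ at $K=N$ matches the paper. So do your last two approximations: Lemma 2 giving $\mathbb{E}(\tilde{\mathbf{h}}_k\tilde{\mathbf{h}}_k^H)=Nd^2/6$, and replacing the per-entry variance of $\bar{\mathbf{H}}$ by $\sigma_h^2$. These are the paper's steps (b) and (c).

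The step that does not go through as justified is applying Lemma 1 to $\gamma\bar{l}_{kk}^2/D_k$, where $D_k\triangleq\gamma\tilde{\mathbf{h}}_k\tilde{\mathbf{h}}_k^H+\sigma_h^2$. Lemma 1 is stated for a numerator $\sum_{i=1}^t x_i$ with i.i.d. summands, and $\bar{l}_{kk}^2=\sum_j|\bar{u}_{kj}|^2$ is not of that form. Here $\bar{\mathbf{u}}_k=\bar{\mathbf{h}}_k\boldsymbol{\Pi}_k$, where $\boldsymbol{\Pi}_k\triangleq\operatorname{P}(\bar{\mathbf{u}}_1,\dots,\bar{\mathbf{u}}_{k-1})$ is a random orthogonal projection. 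So the coordinates of $\bar{\mathbf{u}}_k$ are coupled, and $\bar{l}_{kk}^2$ carries only about $N-k+1$ degrees of freedom (a single complex one when $k=N$).

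Removing this obstacle is what most of the paper's proof does. It keeps $\bar{l}_{kk}^2=\bar{\mathbf{h}}_k\boldsymbol{\Pi}_k\bar{\mathbf{h}}_k^H$ and notes that $\boldsymbol{\Pi}_k$ depends only on rows $1,\dots,k-1$, so it is independent of $(\bar{\mathbf{h}}_k,\tilde{\mathbf{h}}_k)$. It then shows, via the sign-flip induction on $\operatorname{T}_i$ and exchangeability of coordinates, that $\mathbb{E}(\boldsymbol{\Pi}_k)=\frac{N-k+1}{N}\mathbf{I}_N$. This pulls the factor $\frac{N-k+1}{N}$ out of the expectation of the ratio before any approximation. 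Only then is Lemma 1 applied, to $\gamma\bar{\mathbf{h}}_k\bar{\mathbf{h}}_k^H/D_k$, whose numerator and denominator genuinely are sums of $N$ i.i.d. terms.

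You can close the gap with the trace idea you were already reaching for, and more cheaply than the paper's induction. By independence,
\[
\mathbb{E}(\breve{\rho}_k)=\operatorname{tr}\left(\mathbb{E}(\boldsymbol{\Pi}_k)\,\mathbb{E}\left(\frac{\gamma\bar{\mathbf{h}}_k^H\bar{\mathbf{h}}_k}{D_k}\right)\right)=\frac{N-k+1}{N}\,\mathbb{E}\left(\frac{\gamma\bar{\mathbf{h}}_k\bar{\mathbf{h}}_k^H}{D_k}\right).
\]
The second equality needs only $\operatorname{tr}(\boldsymbol{\Pi}_k)=N-k+1$ (whenever the quantized rows are linearly independent, which the Gram--Schmidt form presumes). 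It also needs $\mathbb{E}(\bar{\mathbf{h}}_k^H\bar{\mathbf{h}}_k/D_k)$ to be a multiple of $\mathbf{I}_N$. Negating one entry $h_{ki}$ negates both $\bar{h}_{ki}$ and $\tilde{h}_{ki}$ (the quantizer is odd) and leaves $D_k$ unchanged, which kills the off-diagonal terms. Exchangeability then equalizes the diagonal terms. After this, Lemma 1 applies as stated. Alternatively, the Taylor argument behind Lemma 1 only needs $D_k$ to concentrate and the covariance of numerator and denominator to be small, which holds here. But that is a re-derivation, not an application of the stated lemma.

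Two smaller corrections. First, the entries of $\bar{\mathbf{H}}$ are not rotationally symmetric: they lie on a square lattice, and Bartlett's decomposition needs Gaussianity. Even so, $\mathbb{E}(\bar{l}_{kk}^2)=(N-k+1)\sigma_{\bar h}^2$ holds exactly by the same trace argument, so this was not the main obstacle. Second, if the error is uncorrelated with $\mathbf{H}$, then $\sigma_{\bar h}^2\approx\sigma_h^2+d^2/6$ (Sheppard's correction), not $\sigma_h^2-d^2/6$. This is harmless only because the term is dropped, as in the paper's step (b).
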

\begin{proof}
The proof is provided in Appendix \ref{Proof_pro1}.
\end{proof}

By \textbf{Theorem 1}, we reveal the impact of the normalized SNR of the received signal $\gamma$, the number of antennas $N$ and the bit precision $b$ on the SINR performance. Note that the approximation we employ becomes precise as $N$ and $b$ increase. The validity of these approximations is further supported by simulation results in Sec. \uppercase\expandafter{\romannumeral5}.

In the following analysis, we further explore the relationship between the SINR performance and these three key parameters. In \textbf{Corollary 1}, we first investigate the influence of the received SNR and analyze the asymptotic behavior at high SNR.

\begin{Corollary}\label{Corollary1}
Given a fixed $N$ and $b$, the lower bound of SINR performance exhibits a monotonic growth with respect to $\gamma$. 

There exists a ceiling effect such that when $\gamma$ is sufficiently high, the lower bound of SINR performance for the $k$th user approaches a constant as
\begin{equation}
    \lim _{\gamma \rightarrow \infty} \mathbb{E}(\breve{\rho}_k)  = 3\times2^{2b-2}\frac{N-k+1}{N}.
\end{equation}
\end{Corollary}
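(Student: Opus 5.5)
The plan is to treat Corollary~\ref{Corollary1} as a direct consequence of the closed-form lower bound in Theorem~\ref{Theorem1}. We identify the lower bound of the SINR with
\begin{equation*}
g(\gamma)\triangleq\frac{(N-k+1)\gamma}{\frac{N}{3}\frac{\gamma}{2^{2b-2}}+1},
\end{equation*}
which is the value of $\mathbb{E}(\breve{\rho}_k)$ obtained in the proof of Theorem~\ref{Theorem1} for the case $K=N$. Both claims of the corollary are then statements about this scalar function of $\gamma$, with $N$, $k$ and $b$ held fixed.

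For monotonicity, put $a\triangleq N/(3\cdot 2^{2b-2})>0$ and $c\triangleq N-k+1\ge 1$. Then $g(\gamma)=c\gamma/(a\gamma+1)$. Rewriting it as $g(\gamma)=\frac{c}{a}\left(1-\frac{1}{a\gamma+1}\right)$, or differentiating to get $g'(\gamma)=c/(a\gamma+1)^2>0$, shows that $g$ is strictly increasing for $\gamma\ge 0$.

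For the ceiling effect, divide the numerator and denominator by $\gamma$. This gives $g(\gamma)=c/(a+1/\gamma)$, so $g(\gamma)\to c/a=3\cdot 2^{2b-2}(N-k+1)/N$ as $\gamma\to\infty$. This is the stated constant.

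One point needs care. The corollary writes the limit of $\mathbb{E}(\breve{\rho}_k)$ as an equality, not as a limit of a bound. I would therefore argue the limit directly from Eq.~(\ref{SINR_lower}) as well. As $\gamma\to\infty$, $\breve{\rho}_k\to \bar{l}_{kk}^2/(\tilde{\mathbf{h}}_k\tilde{\mathbf{h}}_k^H)$. Applying Lemma~\ref{lemma1} with $t=N$ gives the approximation $\mathbb{E}(\bar{l}_{kk}^2)/\mathbb{E}(\tilde{\mathbf{h}}_k\tilde{\mathbf{h}}_k^H)$. The two expectations are evaluated as follows:
\begin{itemize}
\item $\mathbb{E}(\bar{l}_{kk}^2)=(N-k+1)\sigma_h^2$, consistent with Eq.~(\ref{withoutQ});
\item $\mathbb{E}(\tilde{\mathbf{h}}_k\tilde{\mathbf{h}}_k^H)=2N\cdot d^2/12$, using the uniform quantization-error model of Lemma~\ref{lemma2} with $d=2\sqrt{2}\sigma_h/2^b$, which gives $N\sigma_h^2/(3\cdot 2^{2b-2})$.
\end{itemize}
Their ratio is exactly $c/a$, so it matches the limit of $g$.

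The main obstacle is justifying the exchange of the limit and the expectation in this direct route, since $\breve{\rho}_k$ is bounded by the integrable random variable $\bar{l}_{kk}^2/(\tilde{\mathbf{h}}_k\tilde{\mathbf{h}}_k^H)$ and monotone in $\gamma$. Monotone convergence makes the exchange legitimate provided that quantity has finite mean. Within the paper's approximation framework this is covered by Lemma~\ref{lemma1}, so I would invoke that lemma rather than attempt a rigorous integrability argument.
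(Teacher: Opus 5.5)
Your proposal is correct and takes essentially the paper's route: your derivative $c/(a\gamma+1)^2$ is exactly the paper's $9\times2^{4b-4}(N-k+1)/(N\gamma+3\times2^{2b-2})^2$, and the ceiling follows by letting $\gamma\to\infty$ in the closed-form bound of Theorem~\ref{Theorem1}. The supplementary direct route is not needed, and it is only approximate: exchanging the limit and the expectation gives $\mathbb{E}\bigl(\bar{l}_{kk}^2/(\tilde{\mathbf{h}}_k\tilde{\mathbf{h}}_k^H)\bigr)$, which equals the stated constant only through Lemma~\ref{lemma1} and the step-(b) approximation $\mathbb{E}(\bar{l}_{kk}^2)\approx(N-k+1)\sigma_h^2$, the same approximations already used in Theorem~\ref{Theorem1}.
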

\begin{proof}
The first-order partial derivative of $\mathbb{E}(\breve{\rho}_k)$ with respect to $\gamma$ is expressed as
\begin{equation}
    \frac{\partial \mathbb{E}(\breve{\rho}_k)}{\partial \gamma}=\frac{9 \times 2^{4 b-4}(N-k+1)}{\left(N \gamma+3 \times 2^{2 b-2}\right)^{2}}>0.
\end{equation}

The ceiling effect is deduced by taking the limit of $\gamma$ approaching infinity in Eq. (\ref{pro_eq}).
\end{proof}

\textit{Remark 1:} \textbf{Corollary} \textbf{\ref{Corollary1}} characterizes the impact of the normalized SNR of the received signal $\gamma$ on the SINR performance and establishes the existence of a ceiling effect in the high-SNR regime. When the SNR is sufficiently high, quantization error emerges as the dominant limiting factor for SINR. Under this condition, each additional bit of precision yields a $6$ dB improvement in SINR.

Subsequently, we analyze the impact of the number of antennas $N$ and present the asymptotic SINR under large $N$ conditions.
\begin{Corollary}\label{Corollary2}
Given a fixed $\gamma$ and $b$, the lower bound of SINR performance exhibits a monotonic increase with respect to $N$. 

A ceiling effect is observed that when the number of antennas is sufficiently large, the lower bound of SINR performance for the $k$th user approaches a constant as
\begin{equation}
    \lim _{N \rightarrow \infty} \mathbb{E}(\breve{\rho}_k)  = 3\times2^{2b-2}.
\end{equation}
\end{Corollary}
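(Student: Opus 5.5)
The argument mirrors the proof of \textbf{Corollary \ref{Corollary1}}: work directly with the closed-form lower bound of \textbf{Theorem \ref{Theorem1}}. Denote the right-hand side of Eq. (\ref{pro_eq}) by $\mathbb{E}(\breve{\rho}_k)$ and clear the nested fraction by multiplying numerator and denominator by $3\times 2^{2b-2}$:
\begin{equation*}
\mathbb{E}(\breve{\rho}_k)=\frac{3\times 2^{2b-2}(N-k+1)\gamma}{N\gamma+3\times 2^{2b-2}}.
\end{equation*}
For fixed $\gamma>0$, $b$ and user index $k$, this is a rational function of $N$ with $N$ appearing linearly in both numerator and denominator. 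Monotonicity and the limit can therefore be read off directly.

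For monotonicity, treat $N$ as a continuous variable with $N\ge k$ and differentiate. Write the expression as $A(N-k+1)\gamma/(N\gamma+A)$ with $A=3\times 2^{2b-2}$. The quotient rule gives a numerator of $A\gamma(N\gamma+A)-A(N-k+1)\gamma\cdot\gamma=A\gamma\bigl(A+(k-1)\gamma\bigr)$. Hence
\begin{equation*}
\frac{\partial \mathbb{E}(\breve{\rho}_k)}{\partial N}=\frac{3\times 2^{2b-2}\gamma\left(3\times 2^{2b-2}+(k-1)\gamma\right)}{\left(N\gamma+3\times 2^{2b-2}\right)^{2}}>0,
\end{equation*}
which is strictly positive because $k\ge 1$, $\gamma>0$ and $b\ge 2$. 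This establishes the monotonic increase in $N$.

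For the ceiling effect, divide numerator and denominator by $N$:
\begin{equation*}
\mathbb{E}(\breve{\rho}_k)=\frac{3\times 2^{2b-2}\left(1-\frac{k-1}{N}\right)\gamma}{\gamma+\frac{3\times 2^{2b-2}}{N}}.
\end{equation*}
Letting $N\to\infty$ with $k$ fixed, the terms $(k-1)/N$ and $3\times 2^{2b-2}/N$ vanish, so the expression tends to $3\times 2^{2b-2}\gamma/\gamma=3\times 2^{2b-2}$, independent of $\gamma$.

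The only step that needs any care is the interpretation. Since $k\le K\le N$, the user index $k$ is held fixed while $N$ grows. The answer is then governed by the quantization term $\frac{N}{3}\frac{\gamma}{2^{2b-2}}$, which grows at the same rate as the array gain $N-k+1$. This is why the limit no longer depends on $\gamma$, and why each extra bit again multiplies the ceiling by $4$, i.e., about $6$ dB.
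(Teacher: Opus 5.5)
Your proof is correct and takes the same route the paper intends. The paper omits this proof, saying only that it resembles Corollary 1: differentiate the closed-form bound of Theorem 1 (here with respect to $N$, which gives your positive derivative $3\times 2^{2b-2}\gamma\,(3\times 2^{2b-2}+(k-1)\gamma)/(N\gamma+3\times 2^{2b-2})^{2}$), then take the limit. One minor nit: positivity of the derivative needs only $3\times 2^{2b-2}>0$, not $b\ge 2$.
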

\begin{proof}
The proof resembles that of \textbf{Corollary 1} and is therefore omitted for brevity.
\end{proof}

\textit{Remark 2:} \textbf{Corollary} \textbf{\ref{Corollary2}} demonstrates the impact of the number of antennas on the SINR performance and provides a closed-form expression for large $N$ cases. Notably, bit precision is the sole determinant of SINR under large $N$ conditions, where each additional bit results in a $6$ dB improvement in SINR. The received SNR no longer influences the SINR in this regime. This phenomenon arises because the array gain averages out the SINR disparity under different $\gamma$. Bit precision ultimately becomes the only factor that limits further SINR improvement.

For the impact of the bit precision on SINR, it is obvious that Eq. (\ref{pro_eq}) demonstrates monotonic improvement with increasing bit precision. There is a theoretical performance ceiling that as the bit precision approaches infinity, the SINR asymptotically converges to the expected ideal SNR without quantization error presented in Eq. (\ref{withoutQ}).

Higher bit precision entails increased costs in hardware fabrication and greater complexity in control, raising the question of what level of bit precision is appropriate \cite{mapping21IRPS}. The absolute SINR value alone cannot serve as a universal metric for evaluating bit precision under different system configurations. To provide a unified framework for evaluating the level of bit precision under different theoretically attainable SNRs, we introduce the quantization efficiency coefficient $\xi$:

\begin{definition}
The quantization efficiency coefficient $\xi$ is defined as the ratio between the SINR expectation with finite bit precision $b$ and the ideal expected SNR without quantization error:
\begin{equation}\label{definexi}
    \xi\triangleq\frac{\mathbb{E}(\rho)}{\mathbb{E}(\text{SNR})}.
\end{equation}
\end{definition}

The quantization efficiency coefficient $\xi$ reflects the degree of performance realization relative to the ideal case. The optimal bit precision is determined by the tolerance threshold for the quantization efficiency coefficient.

\begin{Corollary}\label{Corollary3}
(Optimal bit precision calculation) Given fixed $\gamma$ and $N$, the minimum required bit precision $b_0$ can be analytically determined by any specified quantization efficiency coefficient threshold $\xi_0$ as
\begin{equation}
\label{Coro3_xi}
    b_0=\left \lceil\frac{1}{2}\log_2\frac{4N\gamma\xi_0}{3(1-\xi_0)} \right \rceil.
\end{equation}
\end{Corollary}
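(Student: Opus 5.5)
We will obtain the threshold by plugging the closed-form bound of Theorem \ref{Theorem1} into the definition of $\xi$ in Eq. (\ref{definexi}) and inverting the resulting inequality in $b$.

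First, use Eq. (\ref{withoutQ}) to write the ideal expected SNR of the $k$th user as $(N-k+1)\gamma$. Taking $\mathbb{E}(\rho)$ to be the lower bound in Eq. (\ref{pro_eq}), which is attained at $K=N$ and gives the worst case, the factor $(N-k+1)\gamma$ cancels:
\begin{equation*}
\xi=\frac{1}{\frac{N}{3}\frac{\gamma}{2^{2b-2}}+1}=\frac{3\times 2^{2b-2}}{N\gamma+3\times 2^{2b-2}}.
\end{equation*}
This has two consequences. The coefficient does not depend on the user index $k$, so a single $b_0$ serves all users. It is also strictly increasing in $b$, since the denominator term $N\gamma/(3\cdot 2^{2b-2})$ decreases in $b$. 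We therefore look for the smallest integer $b$ with $\xi\ge\xi_0$, where $\xi_0\in(0,1)$. The endpoint cases are excluded because $\xi<1$ for every finite $b$, and $\xi_0\le 0$ is trivial.

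Second, solve $\xi\ge\xi_0$ for $b$. Cross-multiplying gives
\begin{equation*}
3\times2^{2b-2}(1-\xi_0)\ge \xi_0 N\gamma,
\end{equation*}
that is,
\begin{equation*}
2^{2b}\ge \frac{4N\gamma\xi_0}{3(1-\xi_0)}.
\end{equation*}
Taking $\log_2$ and dividing by $2$ yields $b\ge\frac12\log_2\frac{4N\gamma\xi_0}{3(1-\xi_0)}$. Because $b$ must be an integer and $\xi$ is monotone in $b$, the minimal admissible value is the ceiling of this quantity, which is exactly Eq. (\ref{Coro3_xi}).

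The only delicate point is interpretive rather than computational: $\mathbb{E}(\rho)$ in the definition of $\xi$ must be identified with the lower bound of Theorem \ref{Theorem1}. We would justify this by noting that guaranteeing the threshold for the bound guarantees it for every $K\le N$, because of the monotonicity in Eq. (\ref{lower}). Consequently $b_0$ is sufficient in all configurations, and it is minimal in the worst case $K=N$. We would also remark that the lemma-based approximations inherited from Theorem \ref{Theorem1} carry over unchanged.
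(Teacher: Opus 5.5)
Your proposal is correct and follows the same route as the paper: substitute the lower bound of Theorem~\ref{Theorem1} into the definition of $\xi$, cancel the common factor $(N-k+1)\gamma$ from Eq.~(\ref{withoutQ}), and invert $\xi\ge\xi_0$ in $b$, taking the ceiling. Your monotonicity check and your remark that the condition does not depend on $k$ simply make explicit the algebra that the paper's one-line proof leaves implicit; the paper also writes $\xi_0\ge$ where $\xi\ge$ is meant.
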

\begin{proof}
From \textbf{Theorem} \textbf{\ref{Theorem1}} and Eq. (\ref{definexi}), the quantization efficiency coefficient is lower bounded as
\begin{equation}
    \xi_0\ge\frac{\mathbb{E}(\breve{\rho}_k)}{\mathbb{E}(\text{SNR}_k)}=\frac{1}{\frac{N}{3}\frac{\gamma}{2^{2b-2}} + 1} .
\end{equation}

Eq. (\ref{Coro3_xi}) is derived by enforcing the theoretical lower bound to exceed the minimum acceptable quantization efficiency coefficient.
\end{proof}

\textit{Remark 3:} \textbf{Corollary} \textbf{\ref{Corollary3}} provides a quantitative guideline for determining the optimal bit precision to maintain system performance under varying SNR conditions and antenna configurations. Besides, Eq. (\ref{Coro3_xi}) reveals a design principle that for every 6 dB increase in the SNR of the received signal or a quadrupling of the number of antennas necessitates an additional bit precision to obtain equivalent quantization efficiency.

\subsection{Achievable rate analysis}

Leveraging the derivation of SINR in the previous section, we investigate the ergodic achievable rate in the following analysis.

From Eq. (\ref{r_q}), the ergodic achievable rate for the $k$th user is given by
\begin{equation}
    R_k=\mathbb{E}\left (\log_2\left(1+\frac{\gamma\bar{l}_{kk}^2}{\gamma\left\|\tilde{\mathbf{h}}_k\mathbf{F}_q\right \|_2^2+\sigma_h^2}\right)\right ).
\end{equation}

\begin{theorem}\label{Theorem2}
When the bit precision of RRAM devices is $b$, the ergodic achievable rate for the $k$th user is lower bounded as follows
\begin{equation}
\label{pro2_eq}
     R_k=\log_2\left(1+\frac{(N-k+1)\gamma}{\frac{N}{3}\frac{\gamma}{2^{2b-2}}}\right).
\end{equation}

Similarly, the lower bound is achieved when the number of users matches the number of antennas, i.e., $K=N$.
\end{theorem}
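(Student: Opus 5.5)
The plan is to reduce the rate to the worst case $K=N$ and then pass the expectation inside the logarithm using a convexity argument. I will run the same expectation computation as in Theorem~\ref{Theorem1}, but on the reciprocal of the SINR.

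First, the monotonicity relation in Eq.~(\ref{lower}) holds realization-wise: $\rho_k^{K_1}\ge\rho_k^{K_2}$ for every channel draw. Since $\log_2(1+\cdot)$ is increasing, $R_k$ is non-increasing in $K$. Hence $R_k\ge\mathbb{E}\{\log_2(1+\breve{\rho}_k)\}$, with $\breve{\rho}_k$ as in Eq.~(\ref{SINR_lower}), and equality holds at $K=N$. This disposes of the attainment claim exactly as in Theorem~\ref{Theorem1}.

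Second, write $\breve{\rho}_k = 1/Z$ with
\begin{equation*}
Z = \frac{\gamma\tilde{\mathbf{h}}_k\tilde{\mathbf{h}}_k^H+\sigma_h^2}{\gamma\bar{l}_{kk}^2}.
\end{equation*}
The map $z\mapsto\log_2(1+1/z)$ is convex on $z>0$. Jensen's inequality therefore gives $\mathbb{E}\{\log_2(1+\breve{\rho}_k)\}\ge\log_2\big(1+1/\mathbb{E}(Z)\big)$.

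Third, compute $\mathbb{E}(Z)$. Both the numerator and $\bar{l}_{kk}^2$ are sums of $\mathcal{O}(N)$ i.i.d.-type terms: $\bar{l}_{kk}^2$ is the squared norm of the projection of $\bar{\mathbf{h}}_k$ onto an $(N-k+1)$-dimensional orthogonal complement. So I would invoke Lemma~\ref{lemma1} to get $\mathbb{E}(Z)\approx\mathbb{E}(\text{num})/\mathbb{E}(\text{den})$, matching the approximation already used for Theorem~\ref{Theorem1}.

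For the numerator I use Lemma~\ref{lemma2}. Each real and imaginary part of $\tilde{h}_{kj}$ is uniform on $[-d/2,d/2]$ with $d=2\sqrt{2}\sigma_h/2^b$. Then
\begin{equation*}
\mathbb{E}\big(\tilde{\mathbf{h}}_k\tilde{\mathbf{h}}_k^H\big)=N\cdot\frac{2d^2}{12}=\frac{N\sigma_h^2}{3\cdot 2^{2b-2}}.
\end{equation*}
For the denominator, $\mathbb{E}(\bar{l}_{kk}^2)\approx(N-k+1)\sigma_h^2$, as in Eq.~(\ref{withoutQ}); the difference between the variances of $\bar{h}$ and $h$ is negligible at the orders kept. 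Substituting gives
\begin{equation*}
\frac{1}{\mathbb{E}(Z)}\approx\frac{(N-k+1)\gamma}{\frac{N}{3}\frac{\gamma}{2^{2b-2}}+1},
\end{equation*}
which is the same quantity as in Theorem~\ref{Theorem1}, placed inside $\log_2(1+\cdot)$.

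The main obstacle is this third step. Lemma~\ref{lemma1} yields only an approximation, not an inequality, so the direction of the bound rests on Jensen's inequality, with Lemma~\ref{lemma1} supplying the value of $\mathbb{E}(Z)$. An exact alternative would be the inverse-moment formula $\mathbb{E}(1/\bar{l}_{kk}^2)=1/((N-k)\sigma_h^2)$, using independence of $\tilde{\mathbf{h}}_k$ and $\bar{l}_{kk}$. That gives a rigorous bound with $N-k$ in place of $N-k+1$, which agrees asymptotically for large $N$.

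Finally, the displayed expression (\ref{pro2_eq}) omits the "$+1$" in the denominator. I would present it as the quantization-dominated form, namely the limit $\gamma\gg 3\cdot2^{2b-2}/N$, noting that the rigorous lower bound keeps the "$+1$".
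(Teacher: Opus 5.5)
Your reduction to $K=N$ (realization-wise monotonicity from Eq.~(\ref{lower})) and your moment computations match the paper's. The difference is in how the expectation is moved inside the logarithm. The paper does this in one step: it invokes the external approximation $\mathbb{E}\{\log_2(1+X/Y)\}\approx\log_2(1+\mathbb{E}X/\mathbb{E}Y)$ from its JSTSP reference. It then obtains $\mathbb{E}(\bar{\mathbf{u}}_k\bar{\mathbf{u}}_k^H)=\frac{N-k+1}{N}\mathbb{E}(\bar{\mathbf{h}}_k\bar{\mathbf{h}}_k^H)$ from the sign-symmetry argument of Theorem~1, rather than treating $\bar l_{kk}^2$ as an i.i.d.\ sum. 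You instead split the step into an exact Jensen inequality for the convex map $z\mapsto\log_2(1+1/z)$, followed by the paper's own Lemma~1 applied to $\mathbb{E}(Z)$. Your route is self-contained and makes explicit where a genuine inequality holds. The paper's route needs only first moments of numerator and denominator, whereas yours brings in an inverse moment of $\bar l_{kk}^2$, which is the source of the problems below. Both routes end at the same expression, with the $+1$ in the denominator. You are right that the displayed formula drops it: the paper's own penultimate line requires it, and so does the monotonicity in $\gamma$ claimed in Corollary~\ref{Corollary4}. Omitting the $+1$ makes the expression larger, so it can only be read as the regime $N\gamma\gg 3\cdot 2^{2b-2}$, not as a bound.

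Two of your rigor claims should be withdrawn, though neither changes the agreement with the paper.

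\emph{The bound direction does not survive the Lemma~1 step.} Write $Z=A/B$ with $A=\gamma\tilde{\mathbf{h}}_k\tilde{\mathbf{h}}_k^H+\sigma_h^2$ and $B=\gamma\bar l_{kk}^2$. Since $1/B$ is convex, $\mathbb{E}(Z)$ exceeds $\mathbb{E}(A)/\mathbb{E}(B)$ to leading order; the $\operatorname{Var}(Y)$ term in the expansion of Appendix~\ref{ProofLemma1} is positive. So Lemma~1 underestimates $\mathbb{E}(Z)$ and therefore overestimates $\log_2(1+1/\mathbb{E}(Z))$. Your own alternative shows this, since the honest Jensen value carries $N-k$, strictly below $N-k+1$. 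The chain is ``$\ge$'' followed by an upward-biased ``$\approx$'', so the result is an approximation, exactly as in the paper, not a Jensen-certified bound.

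\emph{The ``exact alternative'' is not exact.}
\begin{itemize}
\item $\tilde{\mathbf{h}}_k$ and $\bar l_{kk}$ are not independent, because both are functions of $\mathbf{h}_k$ (with $\bar{\mathbf{h}}_k=\mathbf{h}_k-\tilde{\mathbf{h}}_k$).
\item $\mathbb{E}(1/l_{kk}^2)=1/((N-k)\sigma_h^2)$ is the Gamma inverse moment for the unquantized Gaussian row. But $\bar{\mathbf{H}}$ is discrete, so $\bar l_{kk}=0$ with positive probability (e.g.\ $\bar{\mathbf{h}}_k=\mathbf{0}$). Hence $\mathbb{E}(1/\bar l_{kk}^2)=\infty$ unless you condition on $\bar l_{kk}>0$.
\item Even in the Gaussian idealization the inverse moment diverges at $k=N$. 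So the claimed asymptotic agreement holds only when $N-k\to\infty$.
\end{itemize}
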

\begin{proof}
Please see Appendix \ref{Proof_pro2}.
\end{proof}

\textbf{Theorem} \textbf{\ref{Theorem2}} demonstrates the impact of the normalized SNR of the received signal $\gamma$, the number of antennas $N$ and the bit precision $b$ on the RRAM-based THP algorithm from the perspective of ergodic rate. Similar to the \textbf{Theorem} \textbf{\ref{Theorem1}}, the approximation is precise as $N$ and $b$ improve. The accuracy of these approximations is further validated by simulation results in Sec. \uppercase\expandafter{\romannumeral5}.

Based on \textbf{Theorem 2}, we proceed to analyze the impact of the received SNR and the number of antennas and provide the asymptotic limit.

\begin{Corollary}\label{Corollary4}
Given a fixed $N$ and $b$, the lower bound of the ergodic achievable rate demonstrates a monotonic growth with respect to $\gamma$. There exists a ceiling effect such that when $\gamma$ is sufficiently high, the lower bound of the ergodic achievable rate for the $k$th user approaches a constant as
\begin{align}
\label{coro4_eq}
    \lim _{\gamma \rightarrow \infty} \breve{R}_k  &= \log_2\left(1+3\times2^{2b-2}\frac{N-k+1}{N}\right)\\
    &\approx 2b+\log_2\left(\frac{3(N-k+1)}{4N}\right).
\end{align}
\end{Corollary}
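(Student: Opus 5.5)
The plan is to mirror the proof of \textbf{Corollary \ref{Corollary1}}, now applied to the rate bound of \textbf{Theorem \ref{Theorem2}}. First we need a concrete expression for $\breve{R}_k$. Although Eq. (\ref{pro2_eq}) is printed without the ``$+1$'' in the denominator, the consistent form is
\begin{align}
\breve{R}_k=\log_2\left(1+\frac{(N-k+1)\gamma}{\frac{N}{3}\frac{\gamma}{2^{2b-2}}+1}\right),
\end{align}
which is $\log_2\!\left(1+\mathbb{E}(\breve{\rho}_k)\right)$ with the SINR bound of \textbf{Theorem \ref{Theorem1}} substituted inside.

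For monotonicity in $\gamma$, we use that $\log_2(1+u)$ is strictly increasing in $u$. It therefore suffices that the inner argument $u(\gamma)=\frac{(N-k+1)\gamma}{N\gamma/(3\cdot 2^{2b-2})+1}$ is increasing in $\gamma$. This is exactly the derivative computed in the proof of \textbf{Corollary \ref{Corollary1}}:
\begin{align}
\frac{\partial u}{\partial\gamma}=\frac{9\times 2^{4b-4}(N-k+1)}{\left(N\gamma+3\times 2^{2b-2}\right)^2}>0 .
\end{align}
By the chain rule, $\partial\breve{R}_k/\partial\gamma=\frac{1}{(1+u)\ln 2}\,\partial u/\partial\gamma>0$.

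For the ceiling, divide the numerator and denominator of $u$ by $\gamma$, giving
\begin{align}
u=\frac{N-k+1}{N/(3\cdot 2^{2b-2})+1/\gamma}.
\end{align}
As $\gamma\to\infty$ this tends to $3\times 2^{2b-2}\frac{N-k+1}{N}$. Since $\log_2(1+\cdot)$ is continuous, the limit passes inside and yields the first line of Eq. (\ref{coro4_eq}). If the formula is read literally as printed, without the $+1$, the expression is already constant in $\gamma$ and equals this limit, so the limit statement still holds.

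For the approximation, note that $3\times 2^{2b-2}\frac{N-k+1}{N}\gg 1$ for moderate $b$, since $k\le N$ and the factor $2^{2b}$ dominates. Hence $\log_2(1+x)\approx\log_2 x$, and
\begin{align}
\log_2\left(3\times 2^{2b-2}\tfrac{N-k+1}{N}\right)=2b+\log_2\tfrac{3(N-k+1)}{4N}.
\end{align}

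The main obstacle is justifying this last step uniformly. It fails when $k$ is close to $N$ and $b$ is small, for instance $k=N$ with $b=2$, where $x=12/N$ may be small. I would state it as an approximation valid when $3\times 2^{2b-2}(N-k+1)/N\gg1$, and bound the error by $\log_2(1+1/x)$.
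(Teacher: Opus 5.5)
Your proposal is correct and follows the paper's own argument: the paper also obtains monotonicity from the increasing inner SINR term established in Corollary~\ref{Corollary1}, and gets the ceiling by letting $\gamma\to\infty$ in Eq.~(\ref{pro2_eq}). You restore the missing ``$+1$'' (the paper's proof omits it in the term it cites as well), and you state an explicit validity condition with an error bound for the step $\log_2(1+x)\approx\log_2 x$; both are sensible refinements that the paper leaves implicit.
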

\begin{proof}
The monotonicity of Eq. (\ref{pro2_eq}) is derived through the monotonicity of the term $\frac{(N-k+1)\gamma}{\frac{N}{3}\frac{\gamma}{2^{2b-2}}}$ proven in \textbf{Corollary} \textbf{\ref{Corollary1}}. Eq. (\ref{coro4_eq}) is obtained by taking the limit of $\gamma$ approaching infinity in Eq. (\ref{pro2_eq}). 
\end{proof}

\begin{Corollary}\label{Corollary5}
Given a fixed $\gamma$ and $b$, the lower bound of the ergodic achievable rate exhibits a monotonic increase with respect to $N$. There exists a ceiling effect such that when $N$ is sufficiently large, the lower bound of the ergodic achievable rate for the $k$th user approaches a constant as
\begin{equation}
\label{coro5_eq}
    \lim _{N \rightarrow \infty} \breve{R}_k  = \log_2\left(1+3\times2^{2b-2}\right)\approx 2b+\log_2\left(\frac{3}{4}\right).
\end{equation}
\end{Corollary}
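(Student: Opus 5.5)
The plan mirrors the proof of Corollary \ref{Corollary2}, now applied to the rate lower bound of Theorem \ref{Theorem2}. Write the bound as $\breve{R}_k=\log_2\left(1+g(N)\right)$, where $g(N)$ is the SINR lower bound of Theorem \ref{Theorem1}:
\begin{equation*}
g(N)=\frac{(N-k+1)\gamma}{\frac{N}{3}\frac{\gamma}{2^{2b-2}}+1}.
\end{equation*}
(Eq. (\ref{pro2_eq}) as printed drops the ``$+1$'' in the denominator. In that form $g=3\times2^{2b-2}(N-k+1)/N$, which is free of $\gamma$ and decreasing in $N$. That contradicts both Corollary \ref{Corollary4} and the claimed monotonicity. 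So we use the form consistent with Theorem \ref{Theorem1} and Corollary \ref{Corollary1}.) Since $\log_2(1+\cdot)$ is strictly increasing and continuous on $[0,\infty)$, both claims reduce to the corresponding properties of $g$ in $N$.

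For monotonicity, treat $N$ as a continuous variable with $k$, $\gamma$, $b$ fixed, and let $c=3\times2^{2b-2}$. Then
\begin{equation*}
g(N)=\frac{c\gamma(N-k+1)}{N\gamma+c},
\end{equation*}
and differentiating gives
\begin{equation*}
\frac{\partial g}{\partial N}=\frac{c\gamma\left(N\gamma+c\right)-c\gamma^{2}(N-k+1)}{\left(N\gamma+c\right)^{2}}=\frac{c\gamma\left(c+(k-1)\gamma\right)}{\left(N\gamma+c\right)^{2}}>0
\end{equation*}
for $k\ge1$, $\gamma>0$. Composing with the increasing logarithm then gives monotone growth of $\breve{R}_k$ in $N$.

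For the ceiling, divide the numerator and denominator of $g$ by $N$:
\begin{equation*}
g(N)=\frac{c\gamma\left(1-\frac{k-1}{N}\right)}{\gamma+\frac{c}{N}}\to c \quad \text{as } N\to\infty .
\end{equation*}
By continuity of $\log_2(1+\cdot)$, $\breve{R}_k\to\log_2(1+3\times2^{2b-2})$. For the approximation, factor out $2^{2b}$:
\begin{equation*}
\log_2\left(1+\tfrac{3}{4}2^{2b}\right)=2b+\log_2\left(\tfrac34\right)+\log_2\left(1+\tfrac{4}{3}2^{-2b}\right),
\end{equation*}
and the last term is $O(2^{-2b})$, negligible for $b\ge2$.

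No step is a real obstacle. The only point needing care is the typographical issue above: the argument must use the lower bound carrying the ``$+1$'' term, without which the monotonicity statement fails.
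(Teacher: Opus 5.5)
Your argument is correct and is essentially the paper's route. The paper defers to Corollary 4, and through it to the derivative argument of Corollaries 1 and 2: monotonicity comes from composing the increasing map $\log_2(1+\cdot)$ with the monotone SINR bound of Theorem 1, and the ceiling from letting $N\to\infty$. Your derivative $c\gamma\left(c+(k-1)\gamma\right)/(N\gamma+c)^2$ just makes explicit the step the paper omits.

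One correction to your parenthetical: the printed form without the ``$+1$'' equals $3\times2^{2b-2}\left(1-\frac{k-1}{N}\right)$. That is nondecreasing, not decreasing, in $N$, and it has the same limit. So the missing ``$+1$'' undermines only the $\gamma$-dependence in Corollary 4 (and strictness when $k=1$), not Corollary 5. Using the ``$+1$'' form is still right, but the reason is that step (b) of the proof of Theorem 2 yields it once $\mathbb{E}(\mathbf{h}_k\mathbf{h}_k^H)=N\sigma_h^2$ and $\mathbb{E}(\tilde{\mathbf{h}}_k\tilde{\mathbf{h}}_k^H)=N\sigma_h^2/(3\times2^{2b-2})$ are substituted.
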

\begin{proof}
The proof resembles that of \textbf{Corollary 4} and is therefore omitted for brevity.
\end{proof}

\textit{Remark 4:} From \textbf{Corollary} \textbf{\ref{Corollary4}} and \textbf{\ref{Corollary5}}, the lower bound of the ergodic achievable rate exhibits an approximately linear growth trend with respect to the bit precision $b$ under conditions of high SNR or large $N$. Similar to \textbf{Corollary} \textbf{\ref{Corollary2}}, when the number of antennas is sufficiently large, the SNR of the received signal does not influence the lower bound of the ergodic achievable rate. The number of antennas $N$ mitigates the performance disparity under varying $\gamma$ levels.

\section{Simulation Result}
In this section, we present the simulation results to validate the feasibility and accuracy of the RRAM circuit and our theoretical analysis of bit precision.
\subsection{RRAM Circuit Result}
\begin{figure}[h]
\centering
\includegraphics[width=0.44\textwidth]{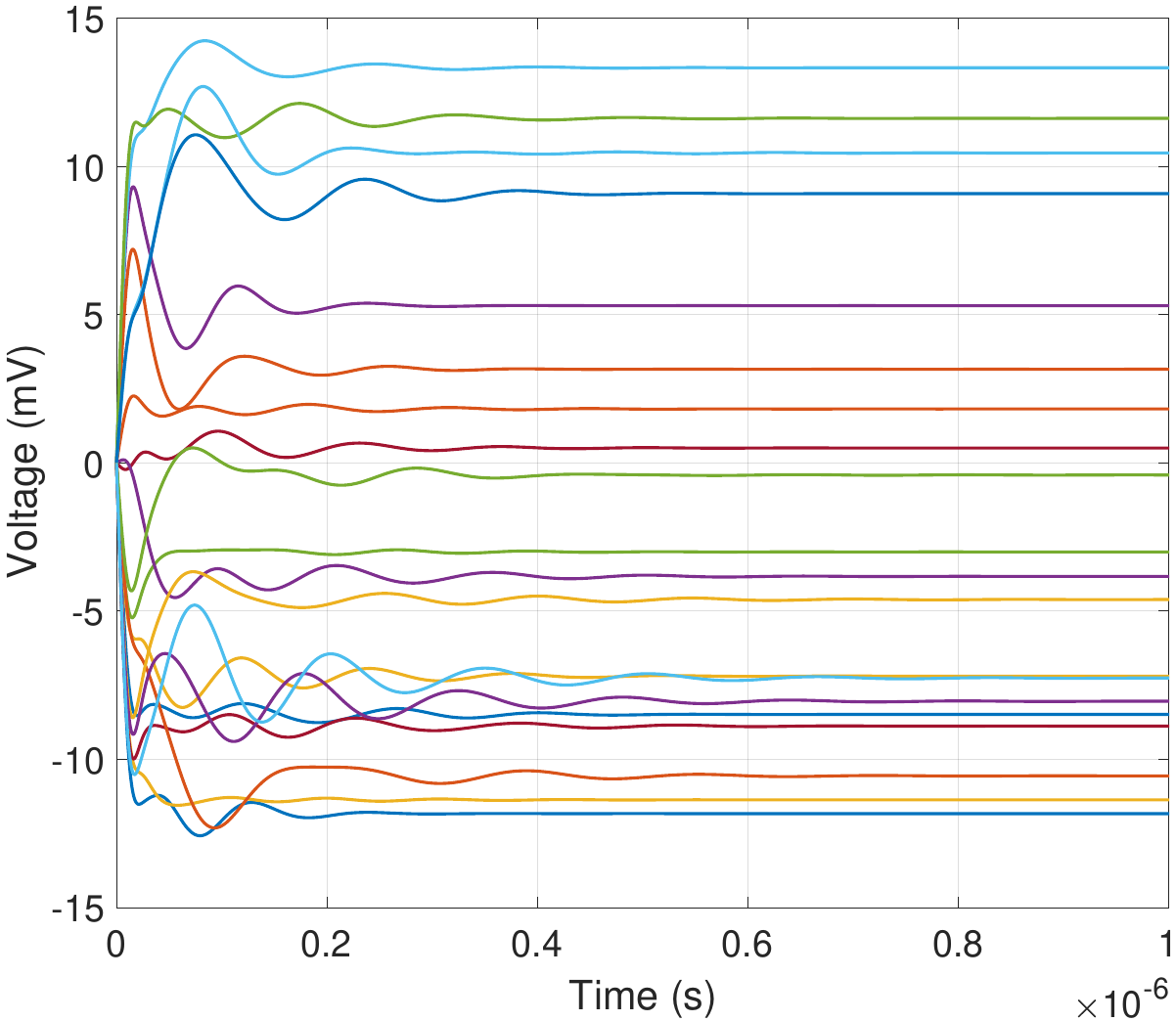}
\caption{Transient output voltage results of $\mathbf{L}$ from the RRAM circuit. $N=128$, $K=32$.}
\label{transi}\end{figure}

To validate the feasibility of the proposed circuit, we adopt a mixed-level simulation approach to implement the RRAM-based LQ decomposition circuit in LTspice. The constraint sub‑circuits are built using device‑level modeling to ensure functional accuracy, and the memristor crossbar arrays are modeled with a high‑level behavioral model to capture the MVM operation. OAs are set to have a gain-bandwidth product of $100$ MHz. The simulation is performed for a representative large‑scale MIMO configuration with $N=128, K=32$. Representative transient results of output are depicted in Fig. \ref{transi}. The output of the RRAM circuit converges rapidly in less than $1$ $\mu \text{s}$. 

\begin{figure}[h]
\centering
\includegraphics[width=0.44\textwidth]{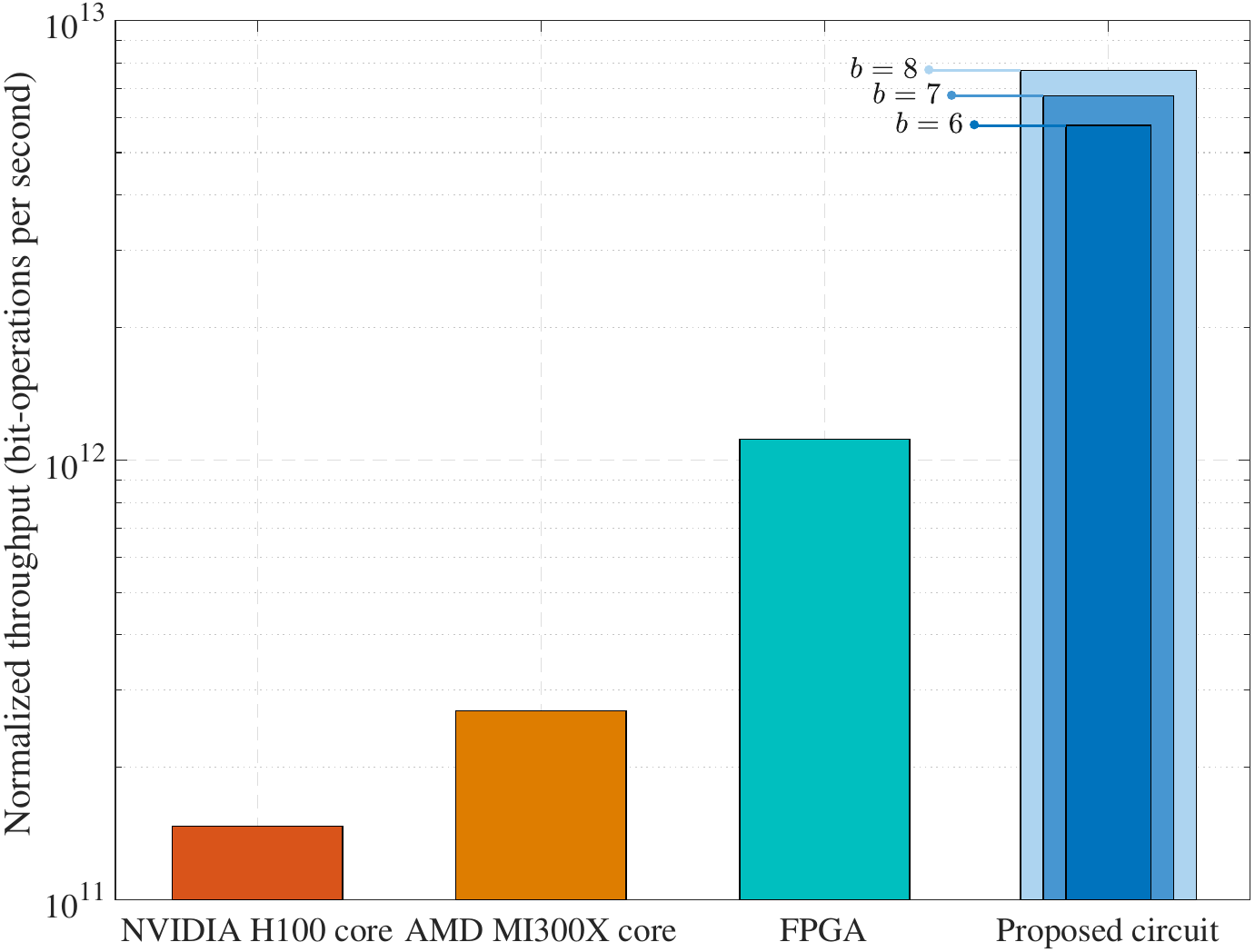}
\caption{Comparison of computing throughput for a $128\times32$ MIMO system.}
\label{compare_computation}\end{figure}

To benchmark the computational performance, we compare the proposed circuit with commercial GPUs (NVIDIA H100\cite{nvidia2024H100} and AMD MI300X\cite{amd2025MI300x}) and FPGA\cite{FPGA2016QR}. For a fair comparison across platforms with different arithmetic precisions, we follow the bit-normalization concept of\cite{BitNorm2022} and define the throughput metric as
\begin{align}
    \eta_{\text{bit}} = \frac{B_{\text{rep}} N_{\text{op}}}{T},
\end{align}
where $N_{\text{op}}$ denotes the number of operations completed within execution time $T$, and $B_{\text{rep}}$ is the bit-width of the numerical representation used by the corresponding implementation. The resulting metric is expressed in bit-operations per second. For the proposed RRAM circuit, $N_{\text{op}}$ is the number of digital operations required by the corresponding LQ decomposition. $B_{\text{rep}}$ is taken as the bit precision $b$ since the circuit performs decomposition on matrices with an effective bit-width of $b$. The throughput of the commercial GPU is normalized to that of a single core\cite{zuo2025precise}. As shown in Fig. \ref{compare_computation}, the proposed circuit achieves a normalized throughput that significantly outperforms both the GPU and FPGA implementations. The result highlights its substantial advantage for real‑time precoding in massive MIMO systems.

\begin{figure}[h]
\centering
\includegraphics[width=0.44\textwidth]{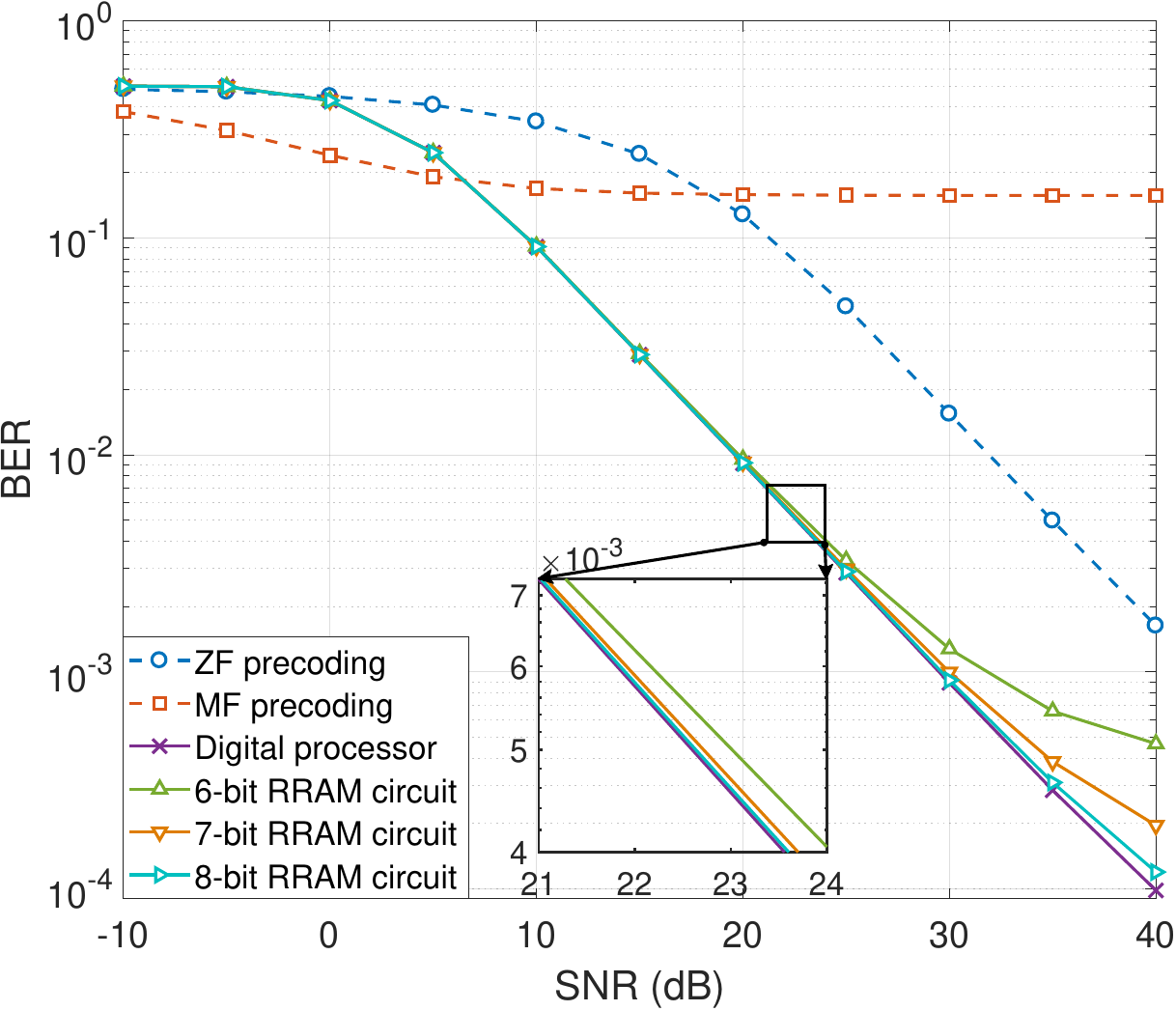}
\caption{BER performance of the RRAM-based THP algorithm. $N=32$, $K=32$.}
\label{BER}\end{figure}

To validate the reliability of the proposed RRAM-based THP algorithm, we evaluate its bit error rate (BER) performance under finite-bit precision. The results are compared with two representative linear precoding schemes. As depicted in Fig. \ref{BER}, the RRAM-based THP algorithm exhibits negligible performance degradation relative to that of the digital processor. Noticeable differences in BER performance between the RRAM circuit and the digital processor arise only at sufficiently high SNR levels.
Besides, the RRAM-based THP algorithm under finite-bit constraints still achieves significantly better BER performance than the linear precoding schemes, proving the advantages of the proposed architecture.

\begin{table}[h]
\caption{The computational complexity comparison}
\label{table I}
\centering
\scalebox{0.9}{
\begin{tabular}{|>{\centering\arraybackslash}m{6em}|>{\centering\arraybackslash}m{8.5em}|>{\centering\arraybackslash}m{6.3em}|}
\hline
Operation & Computational complexity of digital processors\cite{THPtime} & Computational complexity of RRAM circuits\\
\hline
LQ decomposition & $8K^2N-8K^3/3$ & $\boldsymbol{1}$ \\
\hline
Obtain filters & $8K^3-2K^2+2K$ & $\boldsymbol{4K}$ \\
\hline
Generate $\mathbf{w}$ & $4K^2+4K-8$ & $\boldsymbol{11K-11}$\\
\hline
Generate $\mathbf{z}$ & $8KN-2N$ & $\boldsymbol{1}$\\
\hline
$\textbf{Total}$ & $16K^3/3+8K^2N+2K^2+8KN+6K-2N-8$ & $\boldsymbol{15K-9}$\\
\hline
\end{tabular}}
\end{table}

The computational complexity of the RRAM-based THP algorithm is summarized in Table I. In contrast to the traditional digital processor, the complexity order of the RRAM-based THP algorithm is reduced from $\mathcal{O}(K^3)+\mathcal{O}(K^2N)$ to a linear scale of $\mathcal{O}(K)$. It implies the significant advantage of employing RRAM devices to implement nonlinear THP algorithms.

\subsection{Quantization Result}

Then, we depict the SINR performance and achievable rate under different normalized SNR of the received signal $\gamma$, the number of antennas $N$, bit precision $b$ to validate our analysis in Sec. \uppercase\expandafter{\romannumeral4}. Note that in the subsequent numerical simulations, we consider the scenario where the number of users equals the number of antennas ($K = N$). This configuration is chosen to verify the accuracy of the closed-form expression derived in \textbf{Theorem} \textbf{\ref{Theorem1}} and \textbf{\ref{Theorem2}} when the system performance drops to its lower bound.

\begin{figure}[h]
\centering
\includegraphics[width=0.44\textwidth]{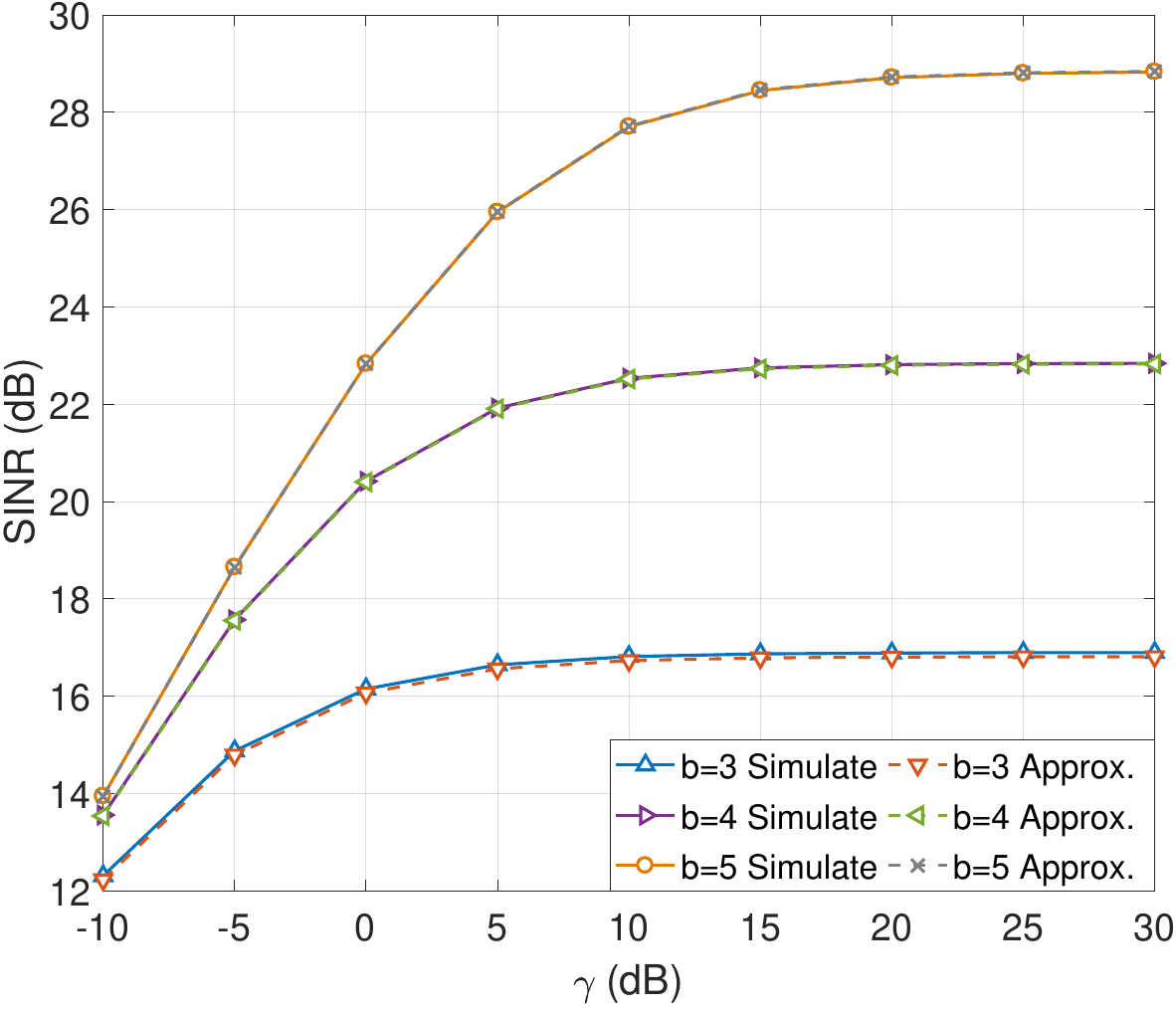}
\caption{The variation of the SINR performance with respect to $\gamma$. $N=256$. (The term 'approx.' is an abbreviation for 'approximation'.)}
\label{gamma_SINR}
\end{figure}

As shown in Fig. \ref{gamma_SINR}, we plot the SINR performance under different $\gamma$ and $b$. The simulation result and approximate result are obtained by Monte Carlo method and Eq. (\ref{pro_eq}), respectively. From Fig. \ref{gamma_SINR}, simulation results closely match the approximation, demonstrating the accuracy of our analysis in Sec. \uppercase\expandafter{\romannumeral4}. Besides, the SINR behavior aligns precisely with \textbf{Corollary} \textbf{\ref{Corollary1}}, showing monotonic growth with $\gamma$ and eventually converging to a ceiling. At high SNR, the SINR exhibits a characteristic $6$ dB improvement per additional bit precision.

\begin{figure}[h]
\centering
\includegraphics[width=0.44\textwidth]{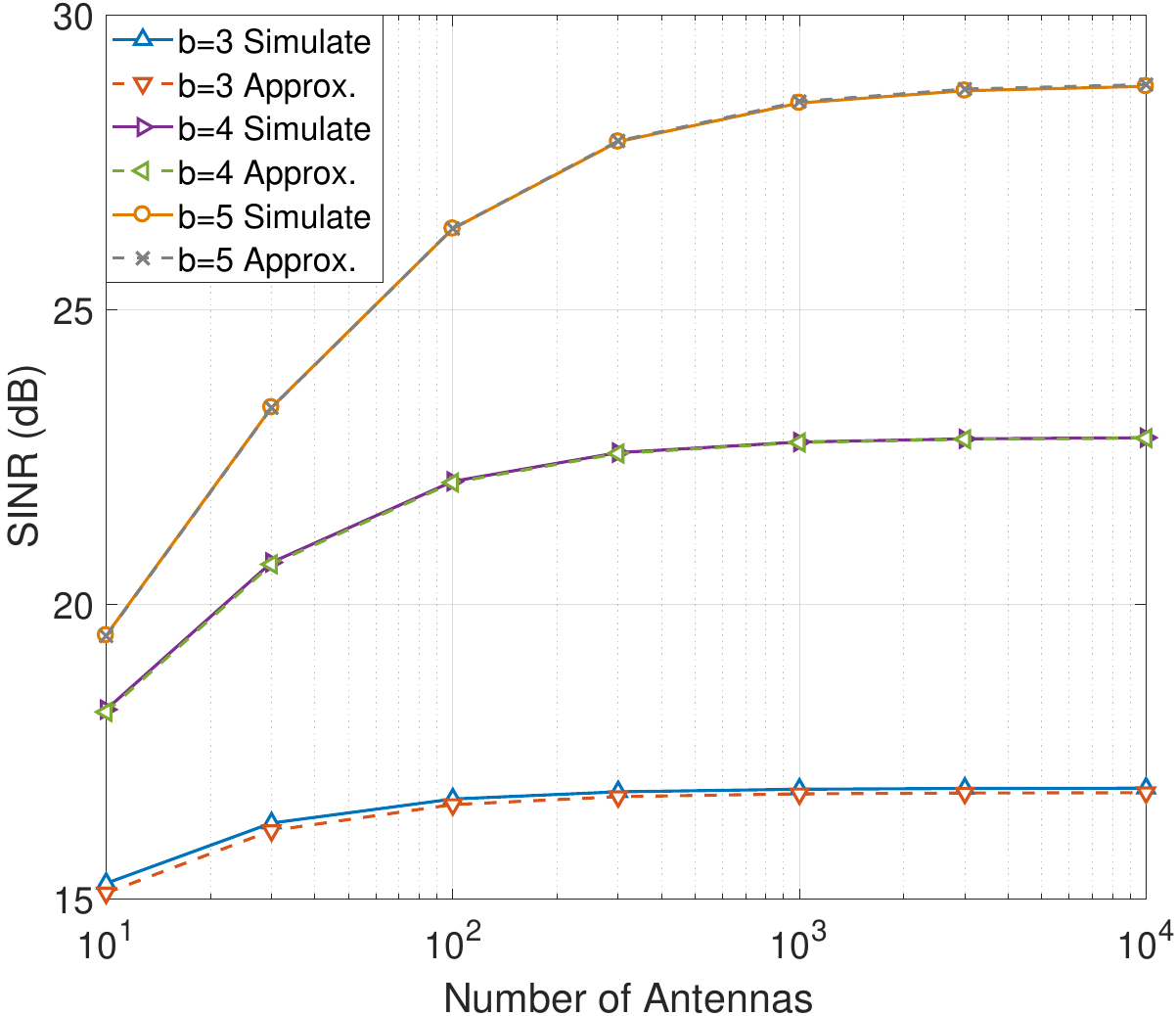}
\caption{The variation of the SINR performance with respect to the number of antennas. $\gamma=10$ dB.}
\label{N_SINR}
\end{figure}

Fig. \ref{N_SINR} shows the relationship between the SINR and the number of antennas $N$. The SINR performance demonstrates monotonic improvement with increasing $N$ before converging to a constant determined by the bit precision $b$. Fig. \ref{N_SINR} also confirms the characteristic $6$ dB/bit improvement predicted by theoretical analysis at large $N$ conditions. The simulation result provides strong experimental validation for \textbf{Corollary} \textbf{\ref{Corollary2}}.

\begin{figure}[h]
\centering
\includegraphics[width=0.44\textwidth]{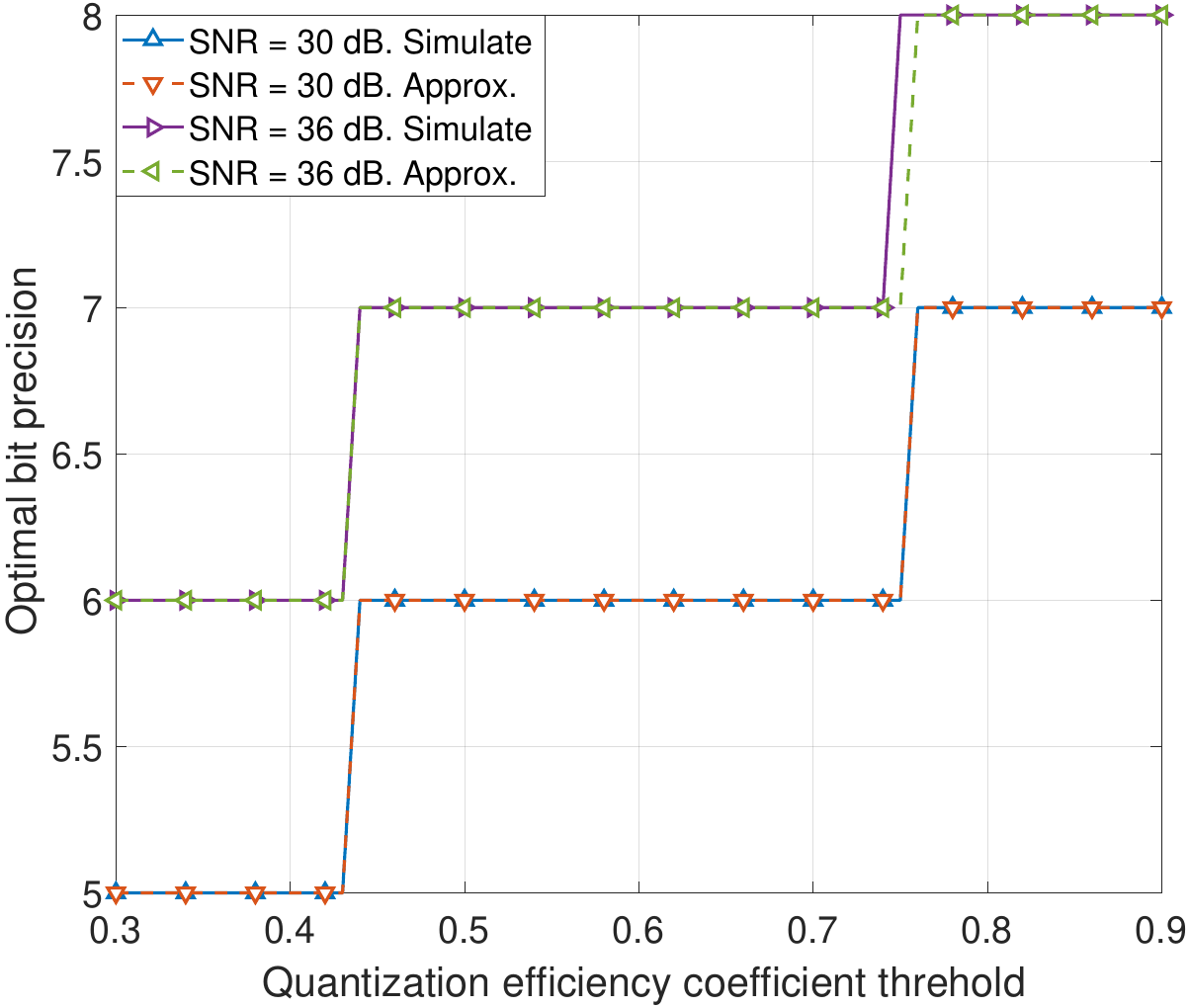}
\caption{The variation of the optimal bit precision with respect to the tolerable threshold of the quantization efficiency coefficient. $N=256$.}
\label{xi_bit}
\end{figure}

For various thresholds of acceptable quantization efficiency coefficients $\xi_0$, we calculate the optimal bit precision. In Fig. \ref{xi_bit}, the simulation result is obtained by sweeping bit precision in ascending order and employing Monte Carlo trials at each precision level to compute SINR until satisfying the requirement of $\xi_0$. The approximate result is calculated by Eq. (\ref{Coro3_xi}). As shown in Fig. \ref{xi_bit}, the closed-form solution derived in \textbf{Corollary} \textbf{\ref{Corollary3}} demonstrates remarkable consistency with exhaustive numerical search results. The results also imply that a $7$-bit RRAM array is sufficient to achieve over $90\%$ SINR performance when the SNR of the received signal is below $30$ dB and the number of antennas $N$ is below $256$.

\begin{figure}[h]
\centering
\includegraphics[width=0.44\textwidth]{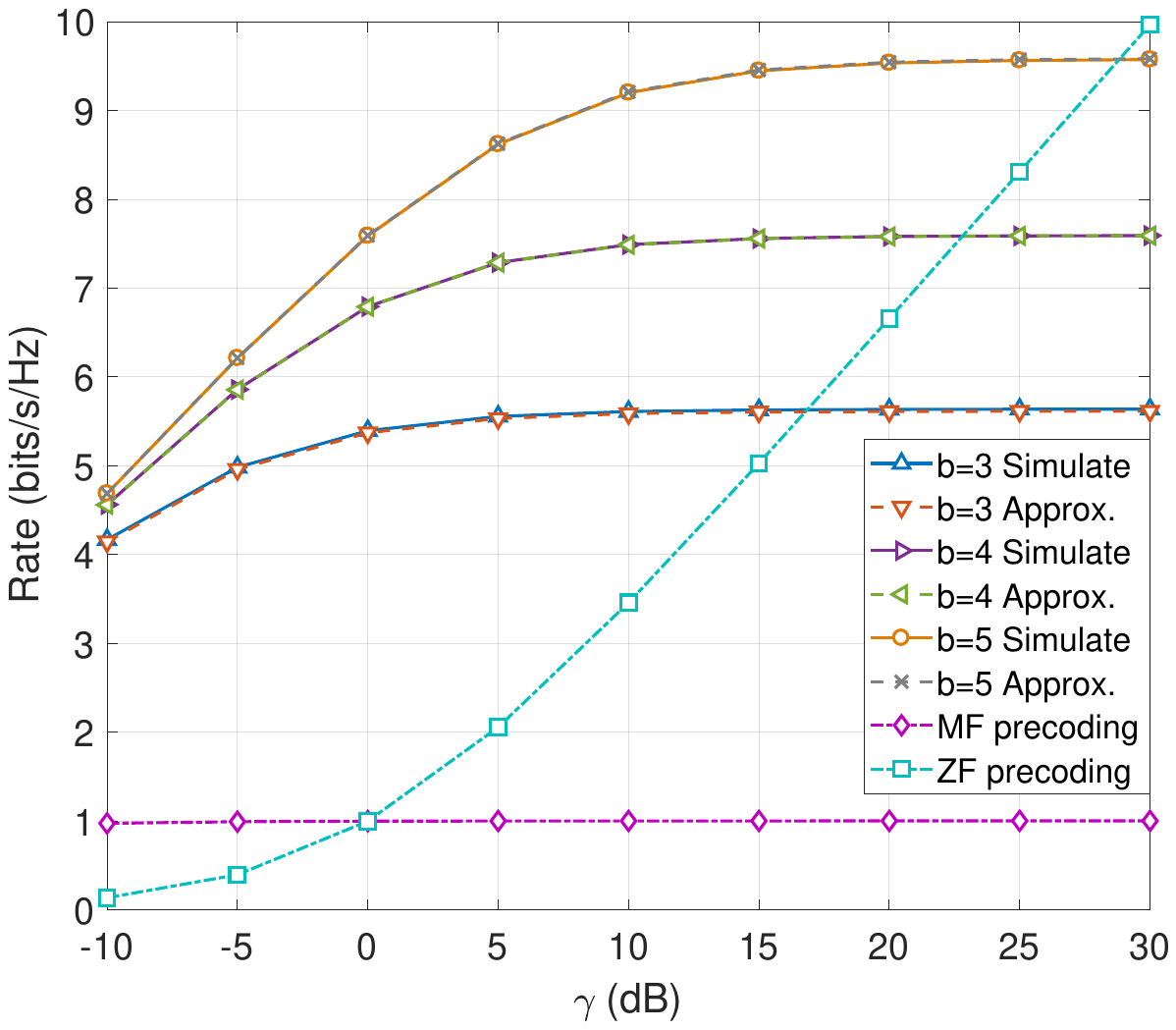}
\caption{The variation of the achievable rate with respect to $\gamma$. $N=256$.}
\label{gamma_R}
\end{figure}

\begin{figure}[h]
\centering
\includegraphics[width=0.44\textwidth]{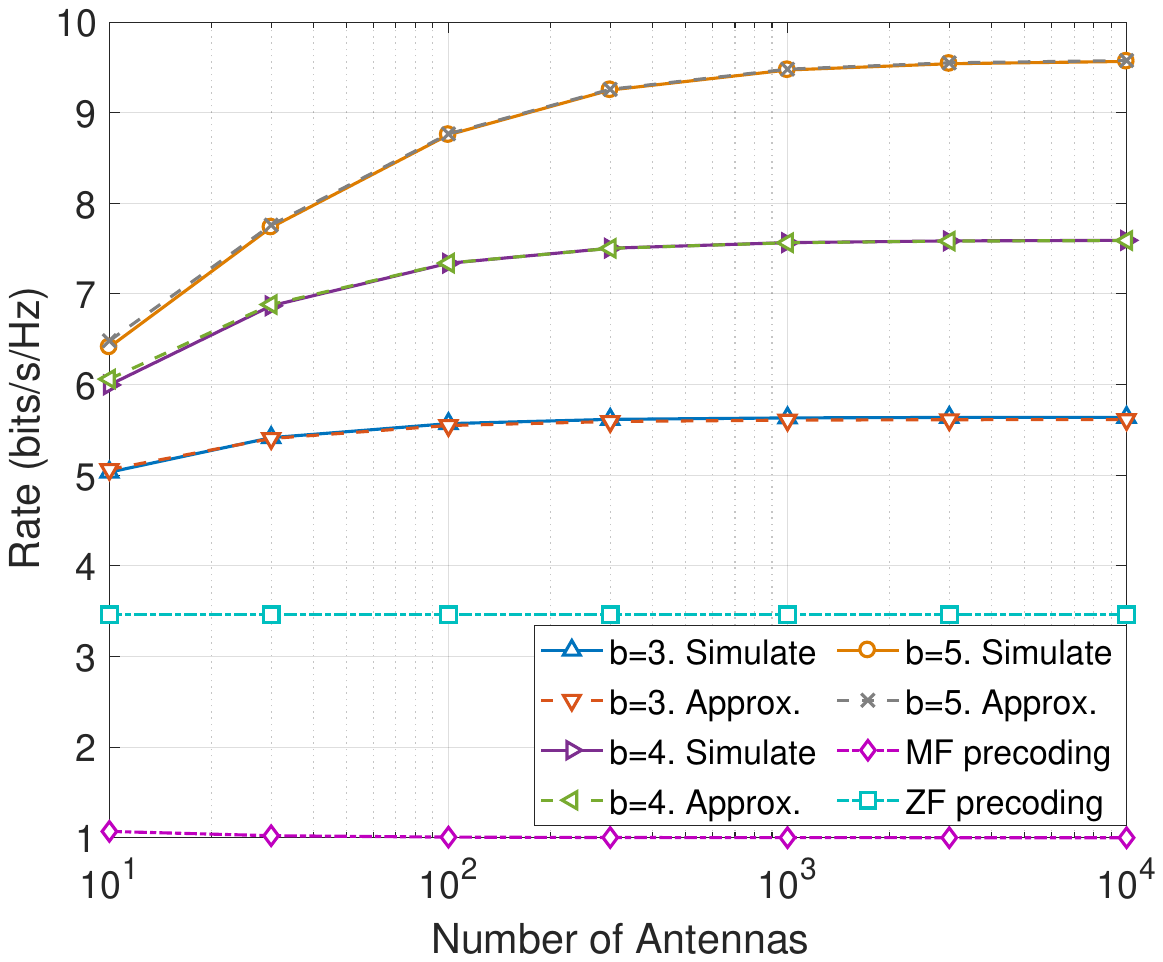}
\caption{The variation of the achievable rate with respect to the number of antennas. $\gamma=10$ dB.}
\label{N_R}
\end{figure}

Fig. \ref{gamma_R} and \ref{N_R} illustrate the ergodic rate as functions of $\gamma$ and $N$, respectively. The rates of the MF precoder and ZF precoder are obtained from \cite{tan2018multiuser}. Across the vast majority of SNR regimes and all antenna configurations considered, the proposed RRAM-based THP outperforms both ideal MF and ZF precoding. The ideally implemented ZF precoder may achieve comparable performance when the RRAM-based THP rate approaches the ceiling imposed by limited precision at sufficiently high SNRs. It further highlights the practical importance of the proposed quantization efficiency coefficient to determine the optimal bit precision across diverse system parameters. Besides, the convergence ceiling of the RRAM-based THP rate exhibits a linear dependence on bit precision under conditions of high SNR or large $N$, as predicted by \textbf{Corollary 4} and \textbf{5}.

\section{Conclusion}
In this paper, we proposed an IMC architecture that utilized RRAM circuits to reduce the computational complexity of the nonlinear THP algorithm to a linear scale. A computation-constraint principle to perform nonlinear matrix operations by RRAM circuits and an RRAM-based LQ decomposition acceleration circuit were proposed. Our results imply the potential of RRAM for implementing computationally intensive algorithms in practical systems. Besides, considering the quantization error caused by the limited bit precision of memristors, we analyzed the impact of bit precision on the SINR and achievable rate. We identified the ceiling effect for both the number of antennas and SNR and revealed the characteristic that under conditions of high SNR or large $N$, every $1$ bit increase in bit precision brings a $6$ dB improvement in SINR and linear growth in achievable rate. To determine the optimal bit precision for practical application, we defined the quantization efficiency coefficient as a quantitative guideline for varying SNRs and antenna configurations. Simulation results substantiated the practicality and accuracy of both the RRAM circuit and our theoretical analysis.

\appendix
\subsection{Proof of Lemma 1}\label{ProofLemma1}

Let $\mu_X$ and $\mu_Y$ denote the expectation of $X$ and $Y$, respectively. For function $g(X,Y)=\frac{X}{Y}$, by applying Taylor series expansion of $g(X,Y)$ around $\mu=(\mu_X,\mu_Y)$, we get that
\begin{align}\label{Taylor}
&g(X,Y)\approx \left.g\right|_{\mu}+\left.\frac{\partial g}{\partial X}\right|_{\mu}\left(X-\mu_{X}\right)+\left.\frac{\partial g}{\partial Y}\right|_{\mu}\left(Y-\mu_{Y}\right)\nonumber
\\&+\frac{1}{2}\left[\left.\frac{\partial^{2} g}{\partial X^{2}}\right|_{\mu}\left(X-\mu_{X}\right)^{2}+\left.2 \frac{\partial^{2} g}{\partial X \partial Y}\right|_{\mu}\left(X-\mu_{X}\right)\left(Y-\mu_{Y}\right)\right.\nonumber\\
&\left.+\left.\frac{\partial^{2} g}{\partial Y^{2}}\right|_{\mu}\left(Y-\mu_{Y}\right)^{2}\right].
\end{align}

Higher-order terms are neglected. From Eq. (\ref{Taylor}), $\mathbb{E}\left(\frac{X}{Y}\right)$ takes the form
\begin{align}\label{exofxy}
    \mathbb{E}\left(\frac{X}{Y}\right) \approx\frac{\mu_X}{\mu_Y}-\frac{\text{Cov}(X,Y)}{\mu_Y^2}+\frac{2\mu_X}{\mu_Y^3}\operatorname{Var}(Y),
\end{align}
where $\text{Cov}(X,Y)$ denotes the covariance of $X$ and $Y$. $\operatorname{Var}(Y)$ denotes the variance of $Y$. Let $\mu_x$ and $\mu_y$ represent the expectation of $x_i$ and $y_i$, respectively. $\sigma_y^2$ denotes the variance of $y_i$. Eq. (\ref{exofxy}) is further given by
\begin{align}
    \mathbb{E}\left(\frac{X}{Y}\right) \approx\frac{\mu_X}{\mu_Y}-\frac{\text{Cov}(x,y)}{t\mu_y^2}+\frac{2\mu_x\sigma_y^2}{t\mu_y}
\end{align}
When $t$ is large, we obtain the following approximation
\begin{align}
    \mathbb{E}\left(\frac{X}{Y}\right)\approx\frac{\mu_X}{\mu_Y}=\frac{\mathbb{E}(X)}{\mathbb{E}(Y)}
\end{align}

The approximation is more accurate as $t$ grows.

\subsection{Proof of Lemma 2}\label{ProofLemma2}
In this appendix, we investigate the distribution of the quantization error $\tilde{x}$ when a finite-bit memristor stores Rayleigh fading channel coefficients. We prove that the PDF of $\tilde{x}$ approximates a variable uniformly distributed at $\left[-\frac{d}{2},\frac{d}{2} \right ]$ when $d\in \left [\sigma/(2^b-1),\sigma\right]$.

The PDF of quantization error $f_b(\tilde{x})$ has the following expression
\begin{equation}
f_b(\tilde{x})=\begin{cases}\frac{1}{\sqrt{2\pi}\sigma}\sum_{k=-2^b+1}^{2^b-1}e^{-\frac{(kd+\tilde{x})^2}{2\sigma^2}}&,|\tilde{x}|\le\frac{1}{2}d,
\\\\
\frac{1}{\sqrt{2\pi}\sigma}e^{-\frac{ \left( \left( 2^b-\frac{1}{2} \right) d+|\tilde{x}| \right) ^2}{2\sigma^2}}&,|\tilde{x}|>\frac{1}{2}d.\end{cases}
\end{equation}

When the bit precision is infinite, the according PDF $f_{\infty}(\tilde{x})$ is written as
\begin{align}
f_{\infty}(\tilde{x})=\frac{1}{\sqrt{2\pi}\sigma}\sum_{k=-{\infty}}^{{\infty}}e^{-\frac{(kd+\tilde{x})^2}{2\sigma^2}}\nonumber\\=\frac{1}{d}(1+\delta_{\infty}(\tilde{x})),\quad|\tilde{x}|<\frac{1}{2}d.
\end{align}
where $\delta_{\infty}(\tilde{x})/d$ is the sum of the high-frequency components of the Fourier expansion of $f_{\infty}(\tilde{x})$. The first term $\frac{1}{d}$ is the DC component of $f_{\infty}(\tilde{x})$, which is equivalent to a uniformly distributed variable. From \cite{Uniform}, when $d\le \sigma$, the sum of the high-frequency components $\left | \delta_{\infty}(\tilde{x}) \right |$ has a small upper bound of $10^{-7}$. Therefore, the quantization error $\tilde{x}$ with infinite bit precision is almost uniformly distributed.

Consider the difference between $f_{b}(\tilde{x})$ and $f_{\infty}(\tilde{x})$ for $|\tilde{x}|<\frac{1}{2}d$:
\begin{align}
&\left |f_{\infty}(\tilde{x})-f_{b}(\tilde{x})\right|
\nonumber\\&=\frac{1}{\sqrt{2\pi}\sigma}\left(\sum_{k=-{\infty}}^{-2^b}e^{-\frac{(kd+\tilde{x})^2}{2\sigma^2}}+\sum_{k=2^b}^{\infty}e^{-\frac{(kd+\tilde{x})^2}{2\sigma^2}}\right )\nonumber\\&\le\frac{2}{\sqrt{2\pi}\sigma}\sum_{k=2^b}^{\infty}e^{-\frac{(kd-\frac{1}{2}d)^2}{2\sigma^2}}.
\end{align}

Let $A(x)$ denote the function $A(x)=e^{-\frac{x^2}{2\sigma^2}}$. When $d\in \left [\sigma/(2^b-1),\sigma\right]$, $A(x)$ is a convex function for $x \ge (2^b-1)d \ge \sigma$. Therefore, we can find the upper bound of the difference between $f_{b}(\tilde{x})$ and $f_{\infty}(\tilde{x})$
\begin{align}
\left |f_{\infty}(\tilde{x})-f_{b}(\tilde{x})\right|\nonumber&\le \frac2{\sqrt{2\pi}\sigma}\frac1d\int_{(2^b-1)d}^{+\infty}e^{-\frac{x^2}{2\sigma^2}}dx\nonumber\\&=\frac1d\operatorname{erfc}\left(\frac{(2^b-1)d}{\sqrt{2} \sigma}\right).
\end{align}

Thus, the relative discrepancy between $\tilde{x}$ and that of a uniformly distributed variable in PDF is upper bounded as
\begin{align}\label{upperbound}
\frac{\left |\frac{1}{d}- f_b(\tilde{x})\right|}{\frac{1}{d}}&\le d\left|f_{\infty}(\tilde{x})-f_{b}(\tilde{x})\right|+|\delta_{\infty}(\tilde{x})|\nonumber\\&\le\mathrm{erfc}\left(\frac{(2^b-1)d}{\sqrt{2} \sigma}\right)+10^{-7}
\end{align}

\subsection{Proof of Eq. (\ref{lower})}\label{Proofremark}
We can write the LQ decomposition on the matrix $\mathbf{H}$ in the form of the Gram-Schmidt Orthogonalization algorithm as
\begin{equation}
\resizebox{0.92\hsize}{!}{$
\begin{bmatrix}
\mathbf{h}_1
\\\mathbf{h}_2
\\
\vdots
\\\mathbf{h}_K
\end{bmatrix}
=
\begin{bmatrix}
  \sqrt{\mathbf{u}_1\mathbf{u}_1^H} & 0 & \cdots & 0 \\
 \frac{\mathbf{h}_2\mathbf{u}_1^H}{\sqrt{\mathbf{u}_1\mathbf{u}_1^H}} &  \sqrt{\mathbf{u}_2\mathbf{u}_2^H} & \cdots & 0 \\
\vdots & \vdots & \ddots & \vdots \\
  \frac{\mathbf{h}_K\mathbf{u}_1^H}{\sqrt{\mathbf{u}_1\mathbf{u}_1^H}} &   \frac{\mathbf{h}_K\mathbf{u}_2^H}{\sqrt{\mathbf{u}_2\mathbf{u}_2^H}} & \cdots &   \sqrt{\mathbf{u}_K\mathbf{u}_K^H}
\end{bmatrix}
\begin{bmatrix}
\frac{\mathbf{u}_1}{\sqrt{\mathbf{u}_1\mathbf{u}_1^H}}
\\\frac{\mathbf{u}_2}{\sqrt{\mathbf{u}_2\mathbf{u}_2^H}}
\\
\vdots
\\\frac{\mathbf{u}_K}{\sqrt{\mathbf{u}_K\mathbf{u}_K^H}}
\end{bmatrix},
$}
\end{equation}
where $\mathbf{u}_i$ denotes the row vector obtained from $\mathbf{H}$ through orthogonalization as
\begin{equation}
\label{u}
\mathbf{u}_i=\mathbf{h}_i-\sum_{j=1}^{i-1} \frac{\mathbf{h}_i\mathbf{u}_j^H}{\mathbf{u}_j\mathbf{u}_j^H}\mathbf{u}_j.
\end{equation}

Therefore, $l_{kk}^2$ has the following expression
\begin{equation}
\label{l_kk_u}
    l_{kk}^2=\mathbf{u}_k\mathbf{u}_k^H.
\end{equation}

Eq. (\ref{u}) and (\ref{l_kk_u}) demonstrate that $\mathbf{u}_k$ and $l_{kk}^2$ are both solely dependent on the first $k$ row of the channel matrix $\mathbf{H}$. Therefore, when the number of users $K$ increases yet the channel state for the first $k$ users remains fixed, $\mathbf{u}_k$ and $l_{kk}^2$ remain unchanged.

Let $\mathbf{f}_i$ denote the $i$th column vector of matrix $\mathbf{F}_{q}$. As $K$ increases from $K_1$ to $K_2$ $(K_1<K_2 \le N)$, we can prove that
\begin{align}
&\rho_k^{K_1}-\rho_k^{K_2}\nonumber \\
=&\frac{\gamma \bar{l}_{kk}^2}{\gamma\left\|\tilde{\mathbf{h}}_k\mathbf{F}_{q,K_1}\right \|_2^2+\sigma_h^2}-\frac{\gamma \bar{l}_{kk}^2}{\gamma\left\|\tilde{\mathbf{h}}_k\mathbf{F}_{q,K_2}\right \|_2^2+\sigma_h^2}\nonumber \\
=&\frac{ \gamma^2\bar{l}_{kk}^2
\tilde{\mathbf{h}}_k \left ( \sum_{i=K_1+1}^{K_2}\mathbf{f}_i\mathbf{f}_i^H\right )\tilde{\mathbf{h}}_k^H
}{\left( \gamma\left\|\tilde{\mathbf{h}}_k\mathbf{F}_{q,K_1}\right \|_2^2+\sigma_h^2 \right )\left( \gamma\left\|\tilde{\mathbf{h}}_k\mathbf{F}_{q,K_2}\right \|_2^2+\sigma_h^2 \right )} \ge 0.
\end{align}

Therefore, additional users lead to additional interference and decrease the SINR performance for the RRAM-based THP algorithm.

\subsection{Proof of Theorem 1}\label{Proof_pro1}
From the property we prove in Appendix \ref{Proofremark}, the expectation of $\rho$ drops to its lower bound when the number of users is equal to the number of antennas $K=N$, which takes the form
\begin{align}
\mathbb{E}(\rho_k) &\ge \mathbb{E}(\breve{\rho}_k) 
= \mathbb{E}\left (\frac{\gamma\bar{\mathbf{u}}_k\bar{\mathbf{u}}_k^H}{\gamma \tilde{\mathbf{h}}_k\tilde{\mathbf{h}}_k^H+\sigma_h^2} \right ) \nonumber\\
\label{pro1_eq1}
&=\mathbb{E}\left (\frac{\gamma\bar{\mathbf{h}}_k \left ( \mathbf{I}_N -\mathbb{E}\left (\sum_{j=1}^{j<k}\frac{\bar{\mathbf{u}}_j^H\bar{\mathbf{u}}_j}{\bar{\mathbf{u}}_j\bar{\mathbf{u}}_j^H} \right )\right ) \bar{\mathbf{h}}_k^H}{\gamma \tilde{\mathbf{h}}_k\tilde{\mathbf{h}}_k^H+\sigma_h^2} \right ),
\end{align}
where $\bar{\mathbf{u}}_i$ denotes the row vector obtained from $\bar{\mathbf{H}}$ through orthogonalization. Let matrix $\mathbf{M}$ denote $\frac{\bar{\mathbf{u}}_k^H\bar{\mathbf{u}}_k}{\bar{\mathbf{u}}_k\bar{\mathbf{u}}_k^H}$. Its element $m_{ij}$ has the expression
\begin{equation}
m_{ij}=\frac{\bar{u}_{ki}^*\bar{u}_{kj}}{\sum_{l=1}^{N} \left |\bar{u}_{kl} \right|^2}.
\end{equation}

Let $g_{n}(\bar{\mathbf{u}}_1,\bar{\mathbf{u}}_2,\dots ,\bar{\mathbf{u}}_n)$ denote the joint PDF of vectors $\bar{\mathbf{u}}_1,\bar{\mathbf{u}}_2,\dots ,\bar{\mathbf{u}}_n$. By mathematical induction, we can prove the following expression holds for all integers $1\le i\le N $
\begin{equation}
\label{MI}
    \resizebox{0.89\hsize}{!}{$g_{n}(\bar{\mathbf{u}}_1,\bar{\mathbf{u}}_2,\dots ,\bar{\mathbf{u}}_n)=g_{n}(\operatorname{T}_i( \bar{\mathbf{u}}_1),\operatorname{T}_i(\bar{\mathbf{u}}_2),\dots ,\operatorname{T}_i(\bar{\mathbf{u}}_n)),$}
\end{equation}
where $\operatorname{T}_i(\cdot)$ denotes an operator that maps a row vector to a new vector by negating its $i$th element while leaving all other elements unchanged. $\operatorname{T}_i(\bar{\mathbf{u}})$ is equivalent to the following equation
\begin{equation}\label{operation_Ti}
    \operatorname{T}_i( \bar{\mathbf{u}})= \bar{\mathbf{u}}\mathbf{D}_i,
\end{equation}
where $\mathbf{D}_i$ is a diagonal matrix with $-1$ in the $i$th diagonal position and $1$ elsewhere on the diagonal. Evidently, $\mathbf{D}_i$ is a Hermitian unitary matrix.

\ \ Base Case ($n=1$): From Eq. (\ref{u}), we obtain that $\bar{\mathbf{u}}_1=\bar{\mathbf{h}}_1$. Clearly, the property is satisfied.

Inductive Hypothesis: Assume Eq. (\ref{MI}) holds for $n=k$, i.e.,
\begin{equation}
    \resizebox{0.89\hsize}{!}{$g_k(\bar{\mathbf{u}}_1,\bar{\mathbf{u}}_2,\dots ,\bar{\mathbf{u}}_k)=g_k(\operatorname{T}_i( \bar{\mathbf{u}}_1),\operatorname{T}_i(\bar{\mathbf{u}}_2),\dots ,\operatorname{T}_i(\bar{\mathbf{u}}_k)),$}
\end{equation}

Inductive Step ($n=k+1$): Let $\operatorname{O}(\bar{\mathbf{u}}_{1},\bar{\mathbf{u}}_{2},\dots,\bar{\mathbf{u}}_{k},\bar{\mathbf{h}}_{k+1})$ denote the orthogonalization operation in Eq. (\ref{u}) to obtain $\bar{\mathbf{u}}_{k+1}$. 
It has the following equivalent expression
\begin{equation}
    \operatorname{O}( \bar{\mathbf{u}}_1,\bar{\mathbf{u}}_2,\dots,\bar{\mathbf{u}}_k,\bar{\mathbf{h}}_{k+1})=\bar{\mathbf{h}}_{k+1}\operatorname{P}(\bar{\mathbf{u}}_1,\bar{\mathbf{u}}_2,\dots,\bar{\mathbf{u}}_k),\label{equal_u}
\end{equation}
where $\operatorname{P}(\bar{\mathbf{u}}_1,\bar{\mathbf{u}}_2,\dots,\bar{\mathbf{u}}_k)$ is defined as
\begin{equation}
 \operatorname{P}(\bar{\mathbf{u}}_1,\bar{\mathbf{u}}_2,\dots,\bar{\mathbf{u}}_k)=\mathbf{I}_N-\sum_{j=1}^{k} \frac{\bar{\mathbf{u}}_j^H\bar{\mathbf{u}}_j}{\bar{\mathbf{u}}_j\bar{\mathbf{u}}_j^H}.
 \end{equation}
From Eq. (\ref{operation_Ti}), we can derive that
\begin{align}
    \operatorname{P}(\operatorname{T}_i( &\bar{\mathbf{u}}_1),\operatorname{T}_i(\bar{\mathbf{u}}_2),\dots,\operatorname{T}_i(\bar{\mathbf{u}}_k))=\mathbf{I}_N-\mathbf{D}_i^H\sum_{j=1}^{k} \frac{\bar{\mathbf{u}}_j^H\bar{\mathbf{u}}_j}{\bar{\mathbf{u}}_j\bar{\mathbf{u}}_j^H}\mathbf{D}_i\nonumber\\
    &=\mathbf{D}_i^{-1}\operatorname{P}(\bar{\mathbf{u}}_1,\bar{\mathbf{u}}_2,\dots,\bar{\mathbf{u}}_k)\mathbf{D}_i.\label{similarity}
\end{align}

From Eq. (\ref{similarity}), applying the operator $\operatorname{T_i}$ to every input vector, the matrix generated by $\operatorname{P}$ is similar to the original matrix. By Eq. (\ref{equal_u}) and (\ref{similarity}), we can further prove that
\begin{equation}
\resizebox{0.88\hsize}{!}{$
\operatorname{O}(\operatorname{T}_i( \bar{\mathbf{u}}_1),\operatorname{T}_i(\bar{\mathbf{u}}_2),\dots,\operatorname{T}_i(\bar{\mathbf{u}}_k),\operatorname{T}_i(\bar{\mathbf{h}}_{k+1}))=\operatorname{T}_i(\bar{\mathbf{u}}_{k+1})$},
\end{equation}

Let $\varphi _{k+1}( \bar{\mathbf{u}}_1,\bar{\mathbf{u}}_2,\dots ,\bar{\mathbf{u}}_{k},\bar{\mathbf{h}}_{k+1})$ denote the joint PDF of vectors $\bar{\mathbf{u}}_1,\dots ,\bar{\mathbf{u}}_k,\bar{\mathbf{h}}_{k+1}$. We can prove that
\begin{align}
    &g_{k+1}(\operatorname{T}_i( \bar{\mathbf{u}}_1),\operatorname{T}_i(\bar{\mathbf{u}}_2),\dots ,\operatorname{T}_i(\bar{\mathbf{u}}_{k+1}))\nonumber\\
    &=\varphi _{k+1}(\operatorname{T}_i( \bar{\mathbf{u}}_1),\operatorname{T}_i(\bar{\mathbf{u}}_2),\dots ,\operatorname{T}_i(\bar{\mathbf{u}}_{k}),\operatorname{T}_i(\bar{\mathbf{h}}_{k+1}))\left|\operatorname{det}(\mathbf{J}_{\operatorname{T}_i})\right|^{-1}\nonumber\\
    &=\varphi _{k+1}( \bar{\mathbf{u}}_1,\bar{\mathbf{u}}_2,\dots ,\bar{\mathbf{u}}_{k},\bar{\mathbf{h}}_{k+1})\left |\operatorname{det}(\mathbf{J}_{\operatorname{T}_i})\right |^{-1},\label{Induction_half}
\end{align}
where $\mathbf{J}_{\operatorname{T}_i}$ is the corresponding Jacobi matrix defined by the mapping from $\left \{\operatorname{T}_i( \bar{\mathbf{u}}_1),\operatorname{T}_i(\bar{\mathbf{u}}_2),\dots ,\operatorname{T}_i(\bar{\mathbf{u}}_{k}),\operatorname{T}_i(\bar{\mathbf{h}}_{k+1}) \right \}$ to $\left \{\operatorname{T}_i( \bar{\mathbf{u}}_1),\operatorname{T}_i(\bar{\mathbf{u}}_2),\dots ,\operatorname{T}_i(\bar{\mathbf{u}}_{k+1})\right \}$, expressed as
\begin{equation}
\mathbf{J}_{\operatorname{T}_i}= \frac{\partial \operatorname{vec}(\operatorname{T}_i( \bar{\mathbf{u}}_1),\operatorname{T}_i(\bar{\mathbf{u}}_2),\dots,\operatorname{T}_i(\bar{\mathbf{u}}_k),\operatorname{T}_i(\bar{\mathbf{u}}_{k+1}))}{\partial \operatorname{vec}(\operatorname{T}_i( \bar{\mathbf{u}}_1),\operatorname{T}_i(\bar{\mathbf{u}}_2),\dots,\operatorname{T}_i(\bar{\mathbf{u}}_k),\operatorname{T}_i(\bar{\mathbf{h}}_{k+1}))} ,
\end{equation}
where $\operatorname{vec}(\mathbf{x},\mathbf{y},\dots)$ denotes a vectorization operator that concatenates row vectors $\mathbf{x},\mathbf{y},\dots$ into a single column vector by stacking their transposes in sequence.

Let $\mathbf{J}$ denote the Jacobi matrix defined by the mapping from $\{\bar{\mathbf{u}}_1,\bar{\mathbf{u}}_2,\dots ,\bar{\mathbf{u}}_{k},\bar{\mathbf{h}}_{k+1}\}$ to $\{\bar{\mathbf{u}}_1,\bar{\mathbf{u}}_2,\dots ,\bar{\mathbf{u}}_{k},\bar{\mathbf{u}}_{k+1}\}$. We can further derive that
\begin{align}
&\operatorname{det}(\mathbf{J}_{\operatorname{T}_i})\nonumber\\
&=\operatorname{det}\left ( \frac{\partial\operatorname{vec}( \operatorname{T}_i( \bar{\mathbf{u}}_{k+1}))}{\partial \operatorname{vec}(\operatorname{T}_i(\bar{\mathbf{h}}_{k+1}))} \right )\label{Jacobidet_Ti}\\
&=\operatorname{det}\resizebox{0.77\linewidth}{!}{$\displaystyle \left ( \frac{\partial\operatorname{vec}( \operatorname{O}(\operatorname{T}_i( \bar{\mathbf{u}}_1),\operatorname{T}_i(\bar{\mathbf{u}}_2),\dots,\operatorname{T}_i(\bar{\mathbf{u}}_k),\operatorname{T}_i(\bar{\mathbf{h}}_{k+1})))}{\partial \operatorname{vec}(\operatorname{T}_i(\bar{\mathbf{h}}_{k+1}))} \right )$ }\\
&=\operatorname{det}\left ( \frac{\partial\operatorname{vec}( \operatorname{O}( \bar{\mathbf{u}}_1,\bar{\mathbf{u}}_2,\dots,\bar{\mathbf{u}}_k,\bar{\mathbf{h}}_{k+1}))}{\partial \operatorname{vec}(\bar{\mathbf{h}}_{k+1})} \right )\label{delete_Ti}\\
&=\operatorname{det}(\mathbf{J})\label{Jacobidet},
\end{align}
where Eq. (\ref{Jacobidet_Ti}) is derived from the independence of $\bar{\mathbf{h}}_{k+1}$ and $\bar{\mathbf{u}}_{1},\bar{\mathbf{u}}_{2},\dots,\bar{\mathbf{u}}_{k}$, combined with the Jacobian matrix $\mathbf{J}_{\operatorname{T}_i}$ inducing an identity mapping for all variables except the components of $\bar{\mathbf{u}}_{k+1}$. Eq. (\ref{delete_Ti}) is obtained by the similarity derived in Eq. (\ref{similarity}). The derivation of Eq. (\ref{Jacobidet}) resembles that of Eq. (\ref{Jacobidet_Ti}).

Therefore, Eq. (\ref{Induction_half}) is further written as
\begin{align}
    g_{k+1}&(\operatorname{T}_i( \bar{\mathbf{u}}_1),\operatorname{T}_i(\bar{\mathbf{u}}_2),\dots ,\operatorname{T}_i(\bar{\mathbf{u}}_{k+1}))\nonumber\\
    &=\varphi _{k+1}( \bar{\mathbf{u}}_1,\bar{\mathbf{u}}_2,\dots ,\bar{\mathbf{u}}_{k},\bar{\mathbf{h}}_{k+1})\left|\operatorname{det}(\mathbf{J})\right|^{-1}\nonumber\\
    &=g_{k+1}( \bar{\mathbf{u}}_1,\bar{\mathbf{u}}_2,\dots ,\bar{\mathbf{u}}_{k+1}).
\end{align}

Thus, by induction, Eq. (\ref{MI}) holds for all $n\ge1$.

Based on Eq. (\ref{MI}), when $i \neq j$, the expectation of $m_{ij}$ is
\begin{equation}
\mathbb{E}\left (m_{ij} \right )=\mathbb{E}\left (\frac{ u_{ki}^*u_{kj}}{\sum_{l=1}^{N} \left |u_{kl} \right|^2 }\right ) =0.
\end{equation}

When $i=j$, based on equivalence among $\bar{u}_{kl}$, we can obtain that
\begin{equation}
\mathbb{E}\left (m_{ii} \right )
=\frac{1}{N}\mathbb{E}\left (\frac{\sum_{l=1}^{N} \left |\bar{u}_{kl} \right|^2}{\sum_{l=1}^{N} \left |\bar{u}_{kl} \right|^2} \right )=\frac{1}{N}.
\end{equation}

Therefore, Eq. (\ref{pro1_eq1}) can be written as
\begin{align}
\mathbb{E}(\rho_k)&\ge
\frac{N-k+1}{N} \mathbb{E}\left (\frac{\gamma\bar{\mathbf{h}}_k\bar{\mathbf{h}}_k^H}{\gamma \tilde{\mathbf{h}}_k\tilde{\mathbf{h}}_k^H+\sigma_h^2} \right )\\
&\stackrel{(a)}\approx\frac{N-k+1}{N} \frac{\gamma\mathbb{E}\left (\bar{\mathbf{h}}_k\bar{\mathbf{h}}_k^H\right )}{\gamma\mathbb{E}\left ( \tilde{\mathbf{h}}_k\tilde{\mathbf{h}}_k^H\right )+\sigma_h^2} \\
&\stackrel{(b)}\approx\frac{N-k+1}{N} \frac{\gamma\mathbb{E}\left (\mathbf{h}_k\mathbf{h}_k^H\right )}{\gamma\mathbb{E}\left ( \tilde{\mathbf{h}}_k\tilde{\mathbf{h}}_k^H\right )+\sigma_h^2} \\
&\stackrel{(c)}\approx\frac{(N-k+1)\gamma }{\frac{N}{3}\frac{\gamma}{2^{2b-2}} + 1},
\end{align}
where we employ \textbf{Lemma} \textbf{\ref{lemma1}} at point $(a)$, \textbf{Lemma} \textbf{\ref{lemma2}} at point $(c)$. At point $(b)$, we assume that the terms $\tilde{\mathbf{h}}_k\tilde{\mathbf{h}}_k^H$, $\tilde{\mathbf{h}}_k\mathbf{h}_k^H$ is negligible compared to $\mathbf{h}_k\mathbf{h}_k^H$, allowing us to simplify the model without significantly compromising the accuracy.

\subsection{Proof of Theorem 2}\label{Proof_pro2}
We can derive that
\begin{align}
R_k &\ge \mathbb{E}(\log_2(\breve{\rho}_k+1)) \\
&=\mathbb{E}\left(\log_2\left(\frac{\gamma \bar{\mathbf{u}}_k\bar{\mathbf{u}}_k^H}  {\gamma \tilde{\mathbf{h}}_k\tilde{\mathbf{h}}_k^H+\sigma_h^2}+1\right)\right) \\
&\stackrel{(a)}\approx\log_2\left(\frac{\gamma \mathbb{E}(\bar{\mathbf{u}}_k\bar{\mathbf{u}}_k^H)}  {\gamma \mathbb{E}(\tilde{\mathbf{h}}_k\tilde{\mathbf{h}}_k^H)+\sigma_h^2}+1\right) 
\\
&=\log_2\left(\frac{N-k+1}{N}\frac{\gamma \mathbb{E}(\bar{\mathbf{h}}_k\bar{\mathbf{h}}_k^H)}  {\gamma \mathbb{E}(\tilde{\mathbf{h}}_k\tilde{\mathbf{h}}_k^H)+\sigma_h^2}+1\right) 
\\
&\stackrel{(b)}\approx\log_2\left(\frac{N-k+1}{N}\frac{\gamma \mathbb{E}(\mathbf{h}_k\mathbf{h}_k^H)}  {\gamma \mathbb{E}(\tilde{\mathbf{h}}_k\tilde{\mathbf{h}}_k^H)+\sigma_h^2}+1\right) 
\\
&\stackrel{(c)}\approx\log_2\left(\frac{(N-k+1)\gamma}{\frac{N}{3}\frac{\gamma}{2^{2b-2}}}+1\right) ,
\end{align}
where we employ \textbf{Lemma} \textbf{\ref{lemma2}} at point $(c)$. At point $(a)$, we employ the result proposed in \cite{frac14JSTSP}. Similarly to the \textbf{Theorem} \textbf{\ref{Theorem1}}, at point $(b)$, we assume that the terms $\tilde{\mathbf{h}}_k\tilde{\mathbf{h}}_k^H$, $\tilde{\mathbf{h}}_k\mathbf{h}_k^H$ is negligible compared to $\mathbf{h}_k\mathbf{h}_k^H$.

\bibliographystyle{IEEEtran}
\bibliography{references.bib}

\end{document}